\newtheorem{thm}{Theorem}[section]
\newtheorem{cor}[thm]{Corollary}
\newtheorem{prop}[thm]{Proposition}
\newtheorem{lem}[thm]{Lemma}
\newtheorem{remark}[thm]{Remark}
\theoremstyle{definition}
\theoremstyle{remark}
\numberwithin{equation}{section}
\title[A simple and efficient method for option pricing]{A simple and efficient numerical method for pricing discretely monitored early-exercise options}\thanks{Disclaimer: The statements made and opinions expressed here are solely our own, and do not reflect the views of China Mechants Bank or its employees and affiliates.}
\author{Min Huang and Guo Luo}
\begin{document}

\begin{abstract}
We present a simple, fast, and accurate method for pricing a variety of discretely monitored options in the Black-Scholes framework, including autocallable structured products, single and double barrier options, and Bermudan options. The method is based on a quadrature technique, and it employs only elementary calculations and a fixed one-dimensional uniform grid. The convergence rate is $O(1/N^4)$ and the complexity is $O(MN\log N)$, where $N$ is the number of grid points and $M$ is the number of observation dates.
\end{abstract}
\keywords{Discrete option pricing, quadrature method, autocallable structured product, single and double barrier option, Bermudan option}
\maketitle


\section {\large Introduction}
\label{sec_intro}

Exotic options are commonly traded throughout the world. Many popular exotic options are path-dependent and have early-exercise features. These options can often be priced using analytical formulas if they are continuously monitored (e.g. barrier options). In practice, however, most path-dependent exotic options are discretely monitored \cite{cont1}, in which case they need to be priced using numerical techniques. Due to the complicated structures of these options, traditional pricing models based on Monte-Carlo simulations and finite difference methods are often too time-consuming to be useful in practical situations. More recent pricing methods based on advanced mathematical techniques, on the other hand, tend to be more efficient (e.g. \cite{barclose,hilbert,cos,berm2}), but for many financial institutions, these methods are often too difficult to understand and to properly implement. To strike a balance between model performance and practical utility, we propose a new quadrature-based method that is much faster and more accurate than the traditional Monte-Carlo and PDE methods, yet at the same time is easy to understand and to implement. We will first give a brief review of the types of products considered, as well as the quadrature-based pricing model which is the foundation of our work. Then we will explain our method and provide numerical examples.

\subsection{Autocallable Structured Products}

Autocallable structured products belong to a class of exotic options with early-exercise features. Many different types of autocallable products have been created and traded in financial markets, and they have become increasingly popular in recent years. We refer to the Appendix of \cite{autopde} for a description of the main features of various autocallable products.

We will consider a very common autocallable product with discrete observation dates. At each observation date there is a pre-specified barrier level. If the price of the underlying asset is greater (less) than or equal to the barrier level (depending on the terms of the product), the option is exercised and a pre-specified fixed-rate return is paid. If the asset price is below (above) the barriers at all observation dates, the option is never exercised and the investor receives a negative return at maturity. In addition, autocallable products may have a knock-in feature. In this case, if the option is never exercised, the negative return the investor receives depends on whether the asset price at maturity reaches a pre-specified knock-in level. While the value of a continuously monitored autocallable product has a simple closed-form solution, the value of a discretely monitored autocallable product cannot be calculated easily. In the discrete case, there exist analytical solutions in terms of multiple integrals, cf. \cite{autoclose1,autoclose2,autopde}. The numerical calculation of these integrals, however, can become prohibitive if the number of observation dates exceeds five. In practice, discretely monitored autocallable products are commonly priced using Monte-Carlo simulations. This method is straightforward, but convergence is usually slow and acceleration techniques such as variance reduction are often needed (cf. \cite{automc1,automc2}). Another popular method for pricing discretely monitored autocallable products is to solve the governing Black-Scholes partial differential equation (PDE) using finite difference method (cf. \cite{autopde}). Assuming a second-order central difference approximation in space, the overall convergence rate of a typical finite difference based pricing method is $O((\delta x)^2 + \delta t)$ if the explicit forward Euler method is used in time, and $O((\delta x)^2 + (\delta t)^2)$ if the implicit Crank-Nicolson method is used. Since two-dimensional grids are needed for finite difference methods, computational complexities are at least of order $1/(\delta x \delta t)$. In addition, since the payoffs of autocallable products are discontinuous (in asset price), additional care (such as smoothing of payoff functions) must be taken to ensure the accuracy of any finite difference approximations.

\begin{remark}
Some autocallable products may have payoffs at maturity that are of the same type as that of European vanilla options. These products can be effectively viewed as combinations of autocallable products and barrier options (see below).
\end{remark}

\subsection{Discrete Barrier Options}

Barrier options are among the most popular types of exotic options. A barrier option may be activated (knock-in option) or deactivated (knock-out option) when the price of the underlying asset crosses certain barrier levels. Barrier options may be discretely or continuously monitored. A single barrier option has one barrier at each observation date, while a double barrier option has two barriers at each observation date. The final payoff of a barrier option (if it is active at maturity) may be of the same type as that of a vanilla option or that of a digital option. A special type of barrier option has a constant amount of cash as the final payoff, and such an option is called a touch option. Most barrier options have time-independent barrier levels, but options with time-dependent barrier levels have also been studied \cite{time}.

Similar to the case of autocallable structured products, there are closed-form solutions for discrete barrier options in terms of multiple integrals \cite{conv1}, but such solutions are often difficult to evaluate directly. In practice, discrete barrier options can be priced using Monte-Carlo simulations or standard binomial tree methods, but these methods are usually slow \cite{cont2}. Other methods that have been proposed to price discrete barrier options include continuity correction approximations \cite{cont1,interp}, Wiener-Hopf methods \cite{barclose}, adaptive mesh methods \cite{amm}, Hilbert transform methods \cite{hilbert}, finite element methods \cite{ele}, Fourier-cosine series expansion methods \cite{cos}, and quadrature methods \cite{quad,gauss}. These methods, while useful in certain contexts, have not been as widely used as the traditional Monte-Carlo and finite difference methods, usually due to their complexity.

\subsection{Bermudan Options}

Bermudan options are discrete versions of American options. A Bermudan option can be exercised at any of the prescribed observation dates, and the payoff is of the same type as that of a vanilla option. Similar to discrete barrier options, Bermudan options can be priced using Monte-Carlo simulations \cite{berm1}, Hilbert transform methods \cite{berm2}, Fourier-cosine series expansion methods \cite{cos}, and quadrature methods \cite{fft,conv,gauss}.

\subsection{Overview of the Quadrature Method}

Among the various methods proposed to price discretely monitored options, the quadrature method is particularly appealing because of its high efficiency and accuracy. The method has been applied to discrete barrier options and Bermudan options \cite{quad,fft,conv,gauss}. The main idea is to solve for option values at each observation date via backward induction in time. The risk-neutral valuation formula is expressed as a single integral, which is then evaluated numerically to produce the option price. Specifically, let $V$ denote the value of the option, $S$ the value of the underlying asset, $r$ the risk-neutral interest rate, $t_1,\dotsc,t_M$ the observation dates, and $\mathbb{E}$ the risk-neutral expectation. If the underlying asset $S$ does not trigger an early exercise at $t_m$, we have
\begin{align*}
  V(t_m,S) & = e^{-r(t_{m+1}-t_m)} \mathbb{E} \bigl[ V(t_{m+1},\cdot)|S \bigr] \\
  & = e^{-r(t_{m+1}-t_m)} \int_0^\infty V(t_{m+1},y) f(y|S)\,dy,
\end{align*}
where $f(y|S)$ is a probability density function whose form depends on the model of the underlying asset. If $S$ triggers an early exercise at $t_m$, on the other hand, the option price $V(t_m,S)$ would be equal to a prescribed value. The integral above can be calculated using FFT \cite{conv,fft} or Fast Gauss Transform \cite{gauss}. Since, however, $V(t_{m+1},y)$ is discontinuous (in $y$) for autocallable products and barrier options, and non-differentiable (in $y$) for Bermudan options, care must be taken to ensure the accuracy of the numerical evaluation of the integral. While several shifted or nonuniform grids have been designed in previous studies to address this difficulty \cite{conv,gauss}, the problem becomes particularly challenging when multiple discontinuities are present at each observation point, for instance in the case of double barrier options with time-dependent barrier levels.

\subsection{Motivation of Our Work}

Although a good number of advanced techniques have been proposed to improve pricing models' accuracy and efficiency, in most practical situations, simple methods that are more cost effective are usually preferred over their more sophisticated counterparts. The reason for this is twofold. First of all, when data quality is not high enough, sophisticated models are not necessarily beneficial. For instance, interest rates and volatilities are crucial components of nearly every pricing model, but they need to be estimated from available market data. If the estimated parameters contain large errors, which is not uncommon in products sold in emerging markets, any advantages gained from the use of sophisticated models may be (more than) offset by these errors, making the simpler models more attractive. Secondly, implementations of pricing models usually involve staff members from multiple business departments, and the resulting products often need active maintenance and updates. As a result, models that are too complicated in nature may hinder effective business communications, which increases maintenance costs and operational risks. In view of these considerations, it is not difficult to see why traditional Monte-Carlo and PDE methods are still among the most popular methods in the valuation of discretely monitored options, even though their computational costs are already high enough to adversely impact their applicability in business.

In view of the practical concerns mentioned above, we propose a new quadrature method to price the aforementioned discretely monitored options in the Black-Scholes framework. The convergence rate of our method is $O(1/N^4)$ and the complexity is $O(MN\log N)$, where $N$ is the number of grid points and $M$ is the number of observation dates. The performance of our method is on par with previous quadrature-based methods such as the CONV method \cite{conv}, but it is more straightforward, and is better suited for products with multiple discontinuities. Our method differs from other quadrature methods mainly in three aspects. First, we work with probability density functions directly instead of using characteristic functions or Toeplitz matrices. Secondly, we use only a fixed one-dimensional uniform grid to compute all integrals. Thirdly, we utilize explicit Black-Scholes formulas to improve the accuracy of the calculations. Due to these novel modifications, our method is very easy to implement, and is capable of handling sophisticated products such as double-barrier options with time-dependent barrier levels.

\subsection{Organization of the Paper}

The rest of the paper is organized as follows. Section \ref{sec_assump} specifies the class of (discrete) option pricing problems that our quadrature method is intended to solve, and Section \ref{sec_outl} presents the main recursion formula for our method. After detailing the implementation of our method in Section \ref{sec_impl}--\ref{sec_berms}, we summarize the algorithm in Section \ref{sec_algo} and then present numerical examples in Section \ref{sec_exmp}. The Appendix collects a few useful theoretical results, which lay the foundation of a class of (discrete) option pricing algorithms (including the one described here) but which, to our best knowledge, do not seem to have been rigorously proved or even properly formulated in the literature. We supply the proofs here in the hope that they would be useful to interested readers.

\section{Basic Assumptions}
\label{sec_assump}

We assume that the price of the underlying asset $S(t)$ satisfies the following stochastic differential equation in the risk-neutral measure:
\begin{equation}
  \label{sde}
  dS(t) = \bigl[ r(t)-q(t) \bigr] S(t)\,dt + \sigma(t) S(t)\,dW(t),
\end{equation}
where $r(t)$ is the risk-neutral interest rate, $q(t)$ is the yield rate, $\sigma(t)$ is the volatility, and $W(t)$ is the Wiener process.

In practice, interest rates are always time dependent. Yield rates for FX products are simply foreign interest rates, and for other types of products they may be implied from futures prices. Thus yield rates are usually time dependent as well. Implied volatilities are time dependent, whereas historical volatilities can often be taken as constant.

The solution to \eqref{sde} is
\begin{equation}
  \label{st}
  S(t) = S(t_0) \exp \biggl\{ \int_{t_0}^{t} \bigl[ r(s)-q(s) - \tfrac{1}{2} \sigma^2(s) \bigr]\,ds + \int_{t_0}^{t} \sigma(s)\,dW(s) \biggr\},
\end{equation}
where $t_0$ is the present date. We consider a discretely monitored option with observation dates $t_1,\dotsc,t_M$, where the last observation date $t_M$ is the maturity date. It follows from \eqref{st} that each $S(t_m)$ for $1 \leq m \leq M$ has the lognormal distribution
\begin{equation}
  \label{stdis}
  S(t_m) \sim S(t_0) \exp \biggl\{ \int_{t_0}^{t_m} \bigl[ r(s)-q(s) - \tfrac{1}{2} \sigma^2(s) \bigr]\,ds + \Bigl( \int_{t_0}^{t_m} \sigma^2(s)\,ds \Bigr)^{1/2}\,Z \biggr\},
\end{equation}
where $Z$ denotes the standard normal distribution. Now define
\[
  r_m = \int_{t_{m-1}}^{t_m} \frac{r(s)}{\Delta t_m}\,ds,\quad
  q_m = \int_{t_{m-1}}^{t_m} \frac{q(s)}{\Delta t_m}\,ds,\quad
  \sigma_m^2 = \int_{t_{m-1}}^{t_m} \frac{\sigma^2(s)}{\Delta t_m}\,ds,
\]
for $1 \leq m \leq M$ where $\Delta t_m = t_m-t_{m-1}$, and define piecewise constant functions
\[
  \tilde{r}(t) = r_m,\quad \tilde{q}(t) = q_m,\quad \text{and}\quad \tilde{\sigma}(t) = \sigma_m,\qquad \text{for}\qquad t_{m-1} < t \leq t_m.
\]
For the process
\[
  d\tilde{S}(t) = \bigl[ \tilde{r}(t)-\tilde{q}(t) \bigr] \tilde{S}(t)\,dt + \tilde{\sigma}(t) \tilde{S}(t)\,dW(t),
\]
since
\begin{align*}
  \int_{t_0}^{t_m} r(s)\,ds & = \sum_{n=1}^{m} \int_{t_{n-1}}^{t_n} r(s)\,ds
    = \sum_{n=1}^{m} r_n \Delta t_n = \int_{t_0}^{t_m} \tilde{r}(s)\,ds, \\
  \int_{t_0}^{t_m} q(s)\,ds & = \sum_{n=1}^{m} \int_{t_{n-1}}^{t_n} q(s)\,ds
    = \sum_{n=1}^{m} q_n \Delta t_n = \int_{t_0}^{t_m} \tilde{q}(s)\,ds, \\
  \intertext{and}
  \int_{t_0}^{t_m} \sigma^2(s)\,ds & = \sum_{n=1}^{m} \int_{t_{n-1}}^{t_n} \sigma^2(s)\,ds
    = \sum_{n=1}^{m} \sigma_n^2 \Delta t_n = \int_{t_0}^{t_m} \tilde{\sigma}^2(s)\,ds,
\end{align*}
it follows that $\tilde{S}(t_m)$ has the same distribution as $S(t_m)$ in \eqref{stdis} for each $m$. Since the value of the option depends only on probability distributions of the asset price at observation dates, the option value remains the same if we replace the process $S$ by the process $\tilde{S}$. In other words, we may safely assume that $r(t),\ q(t)$, and $\sigma(t)$ are piecewise constant functions. Thus in what follows, we shall assume
\[
  r(t) = r_m,\quad q(t) = q_m,\quad \text{and}\quad \sigma(t) = \sigma_m,\qquad \text{for}\qquad t_{m-1} < t \leq t_m.
\]

Consider now a general class of discretely monitored options with barriers. Since the sum of a knock-in barrier option and a knock-out barrier option with the same observation dates and barrier levels is a vanilla option (or a digital option if the barrier options are digital), to study the pricing of these discretely monitored options, it suffices to consider a knock-out barrier option which ceases to exist when barrier levels are crossed. To this end, assume that
{\renewcommand{\labelenumi}{(\Alph{enumi})}
\begin{enumerate}
\item {The option has two strike prices $K_m^-,\, K_m^+ \in [0,\infty]$, with $K_m^- \leq K_m^+$, at each observation date $t_m,\ m = 1,2,\dotsc,M$.}
\item {The option is exercised if $S \leq K_m^-$ or $S \geq K_m^+$ at some $t_m$, and the payoffs are given by $a_m^- S + b_m^-$ (if $S \leq K_m^-$) and $a_m^+ S + b_m^+$ (if $S \geq K_m^+$), respectively, for some $a_m^\pm,\, b_m^\pm \in \mathbb{R}$.}
\item {The final payoff at maturity is
\[
  V(t_M,S) = a_M S + b_M,\qquad \text{for}\qquad K_M^- < S < K_M^+.
\]}%
\end{enumerate}}%
These assumptions are general enough to cover a wide class of discretely monitored options, such as the ones mentioned in the introduction. For instance, common up-and-out autocallable products would have
\begin{align*}
  1 \leq m \leq M:&\quad 0 < K_m^+ < \infty,\ K_m^- = 0,\ a_m^+ = 0,\ b_m^+ > 0; \\
  m = M:&\quad a_M = 0,\ b_M < 0.
\end{align*}
Down-and-out put barrier options would have
\begin{align*}
  1 \leq m \leq M-1:&\quad K_m^+ = \infty,\ 0 < K_m^- < \infty,\ a_m^- = 0,\ b_m^- = 0; \\
  m = M:&\quad K_M^- = 0,\ 0 < K_M^+ < \infty,\ a_M = -1,\ b_M = K_M^+, \\
  &\quad a_M^+ = b_M^+ = 0.
\end{align*}
Double barrier knock-out call options would have
\begin{align*}
  1 \leq m \leq M-1:&\quad 0 < K_m^\pm < \infty,\ a_m^\pm = 0,\ b_m^\pm = 0; \\
  m = M:&\quad K_M^+ = \infty,\ 0 < K_M^- <\infty,\ a_M = 1,\ b_M = -K_M^-, \\
  &\quad a_M^- = b_M^- = 0.
\end{align*}
Bermudan put options with strike $K$ would have
\begin{align*}
  1 \leq m \leq M:&\quad K_m^-:\ \text{the unique solution of}\ K - K_m^- = V(t_m,K_m^-), \\
  &\quad K_m^+ = \infty,\ a_m^- = -1,\ b_m^- = K; \\
  m = M:&\quad a_M = 0,\ b_M = 0.
\end{align*}
We will give a proof of the uniqueness of $K_m^\pm$ for Bermudan options in the Appendix.

To summarize, our basic assumptions are
\begin{enumerate}
\item {The underlying asset price $S$ follows a geometric Brownian motion with piecewise constant interest rates, yield rates, and volatilities.}
\item {There are finitely many observation points, and two exercise levels (possibly $\infty$) at each observation point. If $S$ is above the upper exercise level or below the lower exercise level at any observation point, the option is exercised and the payoff is a linear function in $S$.}
\item {At maturity, if $S$ is between the two exercise levels, a payoff is incurred which is also a linear function in $S$.}
\end{enumerate}

\section{Outline of the method}
\label{sec_outl}

Let $V(t,S)$ denote the value of the option (as a function of asset price $S$) at any time $t$, and let
\[
  V_m(S) = V(t_m,S),\qquad m = 0,1,\dotsc,M,
\]
denote the value of the option at the observation dates. Our goal is to find $V_0(S(t_0))$, and our strategy is to use backward induction in time. Since $V_M(S)$ is piecewise linear in $S$, $V_{M-1}(S)$ has a simple explicit expression. For each $m = M-1,\dotsc,1$, we write $V_{m-1}(S)$ as the risk-neutral expectation of $V_m(S)$ for $K_{m-1}^- < S < K_{m-1}^+$, and as $a_{m-1}^\pm S + b_{m-1}^\pm$ otherwise. The expectation is given by an explicit integral and is calculated numerically. The core of the quadrature method is the calculation of $M-1$ expectation integrals step-by-step. Let
\[
  \tau_m = \frac{1}{2}\, \sigma_m^2 \Delta t_m = \frac{1}{2}\, \sigma_m^2 (t_m - t_{m-1}).
\]
For each $1 \leq m \leq M-1$, note that $S(t_m)$ has a lognormal distribution as in \eqref{stdis}. The relevant probability density functions are known to be \cite{log}
\begin{equation}
  \label{defrho}
  \rho_m(y,S) = \frac{1}{2\sqrt{\pi\tau_m}\, y} \exp \biggl\{ -\frac{1}{4\tau_m} \Bigl( \log \frac{y}{S} - \frac{2}{\sigma_m^2} \bigl[ r_m-q_m - \tfrac{1}{2} \sigma_m^2 \bigr] \tau_m \Bigr)^2 \biggr\}.
\end{equation}
For simplicity of notations we define $K_0^+ = \infty$ and $K_0^- = 0$. By the fundamental theorems of asset pricing, we have the risk-neutral pricing formula \cite{sto}:
\begin{align}
  \label{exp}
  & V_{m-1}(S) = e^{-2r_m \tau_m/\sigma_m^2} \mathbb{E} \bigl[ V_m(\cdot)|S \bigr]
    = e^{-2r_m \tau_m/\sigma_m^2} \int_0^\infty V_m(y) \rho_m(y,S)\,dy \\
  & = \frac{e^{-2r_m \tau_m/\sigma_m^2}}{2\sqrt{\pi\tau_m}} \int_0^\infty \frac{1}{y}\, V_m(y) \exp \biggl\{ -\frac{1}{4\tau_m} \Bigl( \log \frac{y}{S} - \frac{2}{\sigma_m^2} \bigl[ r_m-q_m - \tfrac{1}{2} \sigma_m^2 \bigr] \tau_m \Bigr)^2 \biggr\}\,dy, \nonumber
\end{align}
for $K_{m-1}^- < S < K_{m-1}^+$ and $1 \leq m \leq M-1$. By Assumption (B) from Section \ref{sec_assump}, we also have
\begin{equation}
  \label{correct}
  V_{m-1}(S) =
  \begin{cases}
    a_{m-1}^- S + b_{m-1}^-, & S \leq K_{m-1}^- \\
    a_{m-1}^+ S + b_{m-1}^+, & S \geq K_{m-1}^+
  \end{cases}.
\end{equation}
To further study the formulas \eqref{exp} and \eqref{correct}, we first recall some classical results on the pricing of binary options.

\begin{lem}
\label{bin}
Let $K > 0$, and let $\chi_A$ denote the characteristic function of a set $A$. Consider an option with no early-exercise features.
\begin{enumerate}
\item {If the option has payoff $\hat{V}_m(y) = \chi_{[K,\infty)} y$, then $\hat{V}_{m-1}(S) = V_m^a(S,K,1)$.}
\item {If $\hat{V}_m(y) = \chi_{(0,K]} y$, then $\hat{V}_{m-1}(S) = V_m^a(S,K,-1)$.}
\item {If $\hat{V}_m(y) = \chi_{[K,\infty)}$, then $\hat{V}_{m-1}(S) = V_m^b(S,K,1)$.}
\item {If $\hat{V}_m(y) = \chi_{(0,K]}$, then $\hat{V}_{m-1}(S) = V_m^b(S,K,-1)$.}
\end{enumerate}
The functions $V_m^a$ and $V_m^b$ are defined as
\[
  V_m^a(S,K,\epsilon) = e^{-2q_m \tau_m/\sigma_m^2} S N(\epsilon d_1),\qquad V_m^b(S,K,\epsilon) = e^{-2r_m \tau_m/\sigma_m^2} N(\epsilon d_2),
\]
where $N$ is the cumulative normal distribution function, and
\[
  d_1 = \frac{1}{\sqrt{2\tau_m}} \Bigl( \log \frac{S}{K} + \frac{2}{\sigma_m^2} \bigl[ r_m-q_m + \tfrac{1}{2} \sigma_m^2 \bigr] \tau_m \Bigr),\qquad d_2 = d_1 - \sqrt{2\tau_m}.
\]
\end{lem}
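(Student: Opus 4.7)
The plan is to establish each of the four identities by directly evaluating the risk-neutral expectation integral \eqref{exp} with the prescribed payoff and performing a single logarithmic change of variable. Since statements (1)--(4) all involve characteristic functions of half-lines, the integral in \eqref{exp} reduces to a one-sided Gaussian integral after substitution, and the proof comes down to identifying the resulting endpoint as a multiple of $d_1$ or $d_2$.

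For (3)--(4), I would substitute $\hat V_m(y)=\chi_{[K,\infty)}$ (respectively $\chi_{(0,K]}$) into \eqref{exp} and change variables by $u=\log(y/S)$. The density factor $1/y$ in $\rho_m$ cancels exactly against the Jacobian $dy=y\,du$, leaving a standard Gaussian in $u$ with mean $\mu_m=(r_m-q_m-\tfrac{1}{2}\sigma_m^2)\Delta t_m$ and variance $\sigma_m^2\Delta t_m=2\tau_m$. The resulting integrals from $\log(K/S)$ to $\infty$ and from $-\infty$ to $\log(K/S)$ evaluate to $N(d_2)$ and $N(-d_2)$ respectively, and multiplication by the discount factor $e^{-r_m\Delta t_m}=e^{-2r_m\tau_m/\sigma_m^2}$ yields $V_m^b(S,K,\pm 1)$.

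For (1)--(2), I would again substitute into \eqref{exp} and perform the same change of variables, but now the extra factor $y=Se^u$ coming from the payoff multiplies the Gaussian. The key step is to complete the square in the combined exponent $u-\tfrac{1}{4\tau_m}(u-\mu_m)^2$; this shifts the Gaussian's mean from $\mu_m$ to $\mu_m+\sigma_m^2\Delta t_m$ and produces a constant prefactor $S\,e^{(r_m-q_m)\Delta t_m}$. The shifted Gaussian then integrates to $N(\epsilon d_1)$, and the shift of the mean by $\sigma_m^2\Delta t_m$ is precisely what converts the $-\tfrac{1}{2}\sigma_m^2$ appearing inside $d_2$ into the $+\tfrac{1}{2}\sigma_m^2$ appearing inside $d_1$. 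After including the discount factor, the prefactor simplifies to $S\,e^{-q_m\Delta t_m}=S\,e^{-2q_m\tau_m/\sigma_m^2}$, giving $V_m^a(S,K,\pm 1)$.

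The only non-mechanical step is the completion-of-the-square calculation used in (1)--(2); the rest is essentially bookkeeping to reconcile the paper's parametrization $\tau_m=\tfrac{1}{2}\sigma_m^2\Delta t_m$ with the standard one. Cases (2) and (4) can in fact be deduced from (1) and (3) without redoing the integral, either by replacing $[K,\infty)$ with its complement and using $N(x)+N(-x)=1$, or by observing the symmetry $\epsilon\mapsto-\epsilon$ in the definitions of $V_m^a$ and $V_m^b$.
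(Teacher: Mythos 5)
Your proof is correct, but it takes a different route from the paper: the paper's entire proof is a one-line identification of the four payoffs as asset-or-nothing and cash-or-nothing binary options, followed by a citation of the standard Black--Scholes valuation formulas from Hull's textbook. You instead derive those formulas from scratch by evaluating the risk-neutral integral \eqref{exp} directly, and your calculation checks out: with $u=\log(y/S)$ the density $\rho_m$ becomes a normal density with mean $\mu_m=(r_m-q_m-\tfrac12\sigma_m^2)\Delta t_m$ and variance $2\tau_m=\sigma_m^2\Delta t_m$, the tail probability endpoint is exactly $d_2=\bigl(\log(S/K)+\mu_m\bigr)/\sqrt{2\tau_m}$, and completing the square in cases (1)--(2) shifts the mean by $2\tau_m$ and produces the prefactor $Se^{(r_m-q_m)\Delta t_m}$, which after discounting gives $Se^{-2q_m\tau_m/\sigma_m^2}N(\epsilon d_1)$ as required. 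Your observation that (2) and (4) follow from (1) and (3) by complementation (using $\mathbb{E}[y]=Se^{(r_m-q_m)\Delta t_m}$ for the asset-or-nothing case) is also sound. What your approach buys is self-containment and an explicit reconciliation of the paper's nonstandard parametrization $\tau_m=\tfrac12\sigma_m^2\Delta t_m$ with the usual Black--Scholes quantities, which the paper leaves implicit; what the paper's approach buys is brevity, since these are indeed classical formulas and re-deriving them adds length without new content.
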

\begin{proof}
By definition $V_m^a$ is the value of an asset-or-nothing option, and $V_m^b$ is the value of a cash-or-nothing option. The valuation formulas are just standard results for binary options \cite{book}.
\end{proof}

\begin{remark}
The standard Black-Scholes formulas in Lemma \ref{bin} ignore the possible effects of volatility smiles. If such effects need to be taken into account, one may amend the definitions of $V_m^a$ and $V_m^b$ (as given in the Lemma) by incorporating suitable vega-induced correction terms \cite{vols}. For instance, the value of a cash-or-nothing call option in the presence of volatility smiles would become
\[
  V_{\mathrm{smile}} = V_{\mathrm{no~smile}} - \frac{\partial V_{\mathrm{vanilla}}}{\partial \sigma} \frac{\partial \sigma}{\partial K}.
\]
\end{remark}

We can use Lemma \ref{bin} to obtain an explicit formula for the value of $V_{M-1}(S)$. By Assumption (B)--(C) from Section \ref{sec_assump}, we have
\begin{equation}
  \label{vm}
  V_M(S) =
  \begin{cases}
    a_M^- S + b_M^-, & S \leq K_M^- \\
    a_M S + b_M, & K_M^- < S < K_M^+ \\
    a_M^+ S + b_M^+, & S \geq K_M^+
  \end{cases}.
\end{equation}
Without loss of generality we may assume $0 < K_M^\pm < \infty$, since otherwise we may choose some arbitrary $0 < K_M^\pm < \infty$ and set $a_M^\pm = a_M,\ b_M^\pm = b_M$.

\begin{prop}
The value of the option at $t_{M-1}$ is given by
\label{vm1}
\begin{align*}
  \tilde{V}_{M-1}(S) & = a_M^- V_M^a(S,K_M^-,-1) + b_M^- V_M^b(S,K_M^-,-1) \\
  &\qquad{} + a_M \bigl[ V_M^a(S,K_M^-,1) - V_M^a(S,K_M^+,1) \bigr] \\
  &\qquad{} + b_M \bigl[ V_m^b(S,K_M^-,1) - V_M^b(S,K_M^+,1) \bigr] \\
  &\qquad{} + a_M^+ V_M^a(S,K_M^+,1) + b_M^+ V_M^b(S,K_M^+,1),
\end{align*}
for $K_{M-1}^- < S < K_{M-1}^+$.
\end{prop}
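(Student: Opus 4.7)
The plan is to apply the risk-neutral pricing formula \eqref{exp} at $m = M$ and reduce the resulting integral, by linearity, to a sum of binary-option valuations already handled by Lemma \ref{bin}. Because the expression \eqref{vm} for $V_M$ is piecewise linear with break points at $K_M^-$ and $K_M^+$, each piece is a constant times either $\chi_A(y)\, y$ or $\chi_A(y)$ for $A$ one of $(0,K_M^-]$, $(K_M^-,K_M^+)$, or $[K_M^+,\infty)$. I would begin by explicitly decomposing
\[
  V_M(y) = (a_M^- y + b_M^-)\chi_{(0,K_M^-]}(y) + (a_M y + b_M)\chi_{(K_M^-, K_M^+)}(y) + (a_M^+ y + b_M^+)\chi_{[K_M^+,\infty)}(y),
\]
which agrees with \eqref{vm} everywhere except on the measure-zero set $\{K_M^-, K_M^+\}$ and therefore produces the same integral against $\rho_M(y,S)$.

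Next I would substitute this decomposition into \eqref{exp} with $m = M$ and use linearity of the integral to produce six separate terms. Five of these (the ones supported on $(0,K_M^-]$ and $[K_M^+,\infty)$) match directly the four payoff profiles covered by Lemma \ref{bin}, giving $a_M^- V_M^a(S,K_M^-,-1) + b_M^- V_M^b(S,K_M^-,-1)$ and $a_M^+ V_M^a(S,K_M^+,1) + b_M^+ V_M^b(S,K_M^+,1)$. For the middle piece, I would rewrite
\[
  \chi_{(K_M^-,K_M^+)}(y) = \chi_{[K_M^-,\infty)}(y) - \chi_{[K_M^+,\infty)}(y)
\]
(again differing from the original on a measure-zero set), split the integral into two, and apply Lemma \ref{bin} to each, yielding the two bracketed differences in the stated formula.

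Collecting the six contributions gives exactly the expression claimed for $\tilde V_{M-1}(S)$, valid for $K_{M-1}^- < S < K_{M-1}^+$ (the range in which \eqref{exp} applies). There is no real obstacle in this argument — the proof is entirely a bookkeeping exercise — but the only point requiring mild care is the handling of the middle interval, since Lemma \ref{bin} is formulated only for the one-sided indicators $\chi_{(0,K]}$ and $\chi_{[K,\infty)}$; the two-sided indicator must first be written as a difference of one-sided ones. Everything else follows from the linearity of the conditional expectation and the standard binary-option formulas recalled in Lemma \ref{bin}.
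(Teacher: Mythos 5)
Your proof is correct and follows essentially the same route as the paper, which simply observes that the payoff \eqref{vm} is a linear combination of binary options (two puts at $K_M^-$, two calls at $K_M^-$, four calls at $K_M^+$) and invokes Lemma \ref{bin}. Your version merely spells out the bookkeeping — the indicator decomposition of $V_M$, linearity of the expectation, and the splitting $\chi_{(K_M^-,K_M^+)} = \chi_{[K_M^-,\infty)} - \chi_{[K_M^+,\infty)}$ — in more detail than the paper does.
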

\begin{proof}
Clearly, the option from $t_{M-1}$ to $t_M$ is equivalent to a linear combination of binary options consisting of two put options with strike $K_M^-$, two call options with strike $K_M^-$ and four call options with strike $K_M^+$. The result then follows from Lemma \ref{bin}.
\end{proof}

With the aid of \eqref{exp} and Proposition \ref{vm1}, we may write the main recursion of our quadrature method as follows.

\begin{prop}
\label{main}
Let $\tilde{V}_M = V_M$ be defined in \eqref{vm}, and $\tilde{V}_{m-1}$ be given by the following recursion formula:
\begin{align}
  \label{recur}
  \tilde{V}_{m-1}(S) & = e^{-2r_m \tau_m/\sigma_m^2} \int_{K_m^-}^{K_m^+} \tilde{V}_m(y) \rho_m(y,S)\,dy + a_m^+ V_m^a(S,K_m^+,1) \\
  &\qquad{} + b_m^+ V_m^b(S,K_m^+,1) + a_m^- V_m^a(S,K_m^-,-1) + b_m^- V_m^b(S,K_m^-,-1), \nonumber
\end{align}
for $1 \leq m \leq M$. Then we have $\tilde{V}_m(S) = V_m(S)$ for $K_m^- < S < K_m^+$ and $0 \leq m \leq M$. In particular, $\tilde{V}_0(S(t_0)) = V_0(S(t_0))$.
\end{prop}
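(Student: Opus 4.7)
The plan is to proceed by backward induction on $m$. The base case $m = M$ is immediate: by definition $\tilde V_M = V_M$, so the identity $\tilde V_m(S) = V_m(S)$ holds trivially on $(K_M^-, K_M^+)$. The concluding statement $\tilde V_0(S(t_0)) = V_0(S(t_0))$ will follow from the conventions $K_0^- = 0$ and $K_0^+ = \infty$, which guarantee $S(t_0) \in (K_0^-, K_0^+)$.

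For the inductive step I would fix $1 \leq m \leq M$, assume $\tilde V_m(y) = V_m(y)$ on $(K_m^-, K_m^+)$, and compute $V_{m-1}(S)$ for $K_{m-1}^- < S < K_{m-1}^+$ starting from the risk-neutral formula \eqref{exp}. The natural move is to split the integration domain $(0,\infty)$ at the breakpoints $K_m^\pm$ into three pieces. On the outer pieces $(0,K_m^-]$ and $[K_m^+,\infty)$, Assumption (B) of Section \ref{sec_assump} lets me replace $V_m(y)$ by the prescribed affine expressions $a_m^- y + b_m^-$ and $a_m^+ y + b_m^+$ respectively; on the middle piece $(K_m^-, K_m^+)$ the inductive hypothesis lets me replace $V_m(y)$ by $\tilde V_m(y)$.

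The outer contributions then decompose into four elementary integrals of the form $e^{-2r_m \tau_m/\sigma_m^2}\int y\,\rho_m\,dy$ and $e^{-2r_m\tau_m/\sigma_m^2}\int \rho_m\,dy$ against the indicators $\chi_{(0,K_m^-]}$ and $\chi_{[K_m^+,\infty)}$. Each is precisely one of the four binary option values of Lemma \ref{bin}: for instance part (2) with $K = K_m^-$ identifies $e^{-2r_m\tau_m/\sigma_m^2}\int_0^{K_m^-} y\,\rho_m(y,S)\,dy$ with $V_m^a(S,K_m^-,-1)$, and the remaining three pieces are handled by parts (1), (3), (4) with the appropriate sign $\epsilon = \pm 1$ and strike $K_m^\pm$. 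Summing the five resulting contributions reproduces exactly the right-hand side of \eqref{recur}, so $V_{m-1}(S) = \tilde V_{m-1}(S)$ for $K_{m-1}^- < S < K_{m-1}^+$, closing the induction.

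There is no serious technical obstacle here; the whole argument is a bookkeeping assembly of the tower property of conditional expectation with the binary option formulas already packaged in Lemma \ref{bin}. The only place where mistakes are easy to make is in matching the sign $\epsilon$ and the strike $K_m^\pm$ to the correct outer integration region, so that the four outer pieces line up term-for-term with the affine coefficients $a_m^\pm, b_m^\pm$ appearing in \eqref{recur}.
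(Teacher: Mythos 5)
Your proof is correct and follows essentially the same route as the paper: backward induction from $m=M$, splitting the expectation integral in \eqref{exp} at $K_m^\pm$, replacing $V_m$ by the affine payoffs on the outer pieces (Assumption (B), i.e.\ \eqref{correct}) and by $\tilde V_m$ on the middle piece, and matching the outer integrals with the binary option values of Lemma \ref{bin}. The paper merely packages the inductive hypothesis as the single identity \eqref{vtil} combining all three pieces, which is logically equivalent to your formulation.
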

\begin{proof}
This is merely the classical recursion formula for quadrature methods specialized to the Black-Scholes model. To prove the formula, we only need to show
\begin{equation}
  \label{vtil}
  V_m(S) = \tilde{V}_m(S) \chi_{(K_m^-,K_m^+)} + (a_m^+ S + b_m^+) \chi_{[K_m^+,\infty)} + (a_m^- S + b_m^-) \chi_{(0,K_m^-]},
\end{equation}
for all $0 \leq m \leq M$. By assumption \eqref{vtil} is true for $m = M$. Assume now \eqref{vtil} holds for some $1 \leq m \leq M$. Substituting the equation into \eqref{exp}, applying Lemma \ref{bin}, comparing the result with \eqref{recur} and using \eqref{correct}, we observe that \eqref{vtil} holds for $m-1$. The result then follows from induction.
\end{proof}

\begin{remark}
Recursion formula \eqref{recur} lies at the heart of our quadrature method and distinguishes our method from other quadrature methods, which are primarily based on \eqref{exp} or one of its many variants. The significance of the formula \eqref{recur} lies in the fact that it makes explicit use of Black-Scholes formulas to separate the expectation integral $\mathbb{E} \bigl[ V_m(\cdot)|S \bigr]$ into a ``quadrature part''
\[
  F_{m-1}(S) = e^{-2r_m \tau_m/\sigma_m^2} \int_{K_m^-}^{K_m^+} \tilde{V}_m(y) \rho_m(y,S)\,dy,
\]
and an ``early-exercise part''
\begin{align*}
  E_{m-1}(S) & = a_m^+ V_m^a(S,K_m^+,1) + b_m^+ V_m^b(S,K_m^+,1) \\
  &\qquad{} + a_m^- V_m^a(S,K_m^-,-1) + b_m^- V_m^b(S,K_m^-,-1).
\end{align*}
Since the function $\tilde{V}_m(S)$ is smooth for $S \in (K_m^-,K_m^+)$ (in fact for all $S \in (0,\infty)$, as we will show below), the integral $F_{m-1}(S)$ can be evaluated accurately and efficiently using a high-order quadrature method such as Simpson's rule. In contrast, the integrand $V_m(S)$ in the original recursion formula \eqref{exp} is discontinuous on $(0,\infty)$ (in either $V_m$ itself or in its first derivative); this makes the accurate evaluation of the expectation integral a difficult and challenging task. Although \eqref{recur} applies specifically to the Black-Scholes model, the same idea can be used for other asset price models, as long as a suitable analytical formula (exact or approximate) can be found for the probability density function $\rho_m(y,S)$ and early-exercise part $E_{m-1}(S)$.
\end{remark}

With $\tilde{V}_{M-1}(S)$ given in Proposition \ref{vm1}, Proposition \ref{main} implies that we may apply \eqref{recur} successively to obtain $\tilde{V}_{M-2}(S),\tilde{V}_{M-3}(S),\dotsc,\tilde{V}_0(S)$. The value of the option is equal to $\tilde{V}_0(S(t_0))$.

\section{Details of Implementation}
\label{sec_impl}

In \eqref{recur}, $\tilde{V}_{m-1}(S)$ is written as a sum of explicit functions and an integral, namely
\begin{align}
  \label{recur1}
  \tilde{V}_{m-1}(S) & = a_m^+ V_m^a(S,K_m^+,1) + b_m^+ V_m^b(S,K_m^+,1) \\
  &\qquad{} + a_m^- V_m^a(S,K_m^-,-1) + b_m^- V_m^b(S,K_m^-,-1) + F_{m-1}(S), \nonumber
\end{align}
where
\[
  F_{m-1}(S) = e^{-2r_m \tau_m/\sigma_m^2} \int_{K_m^-}^{K_m^+} \tilde{V}_m(y) \rho_m (y,S)\,dy.
\]
We may truncate the integral by replacing its upper and lower bounds by
\[
  L_m^+ = \min \bigl\{ K_m^+, S(t_0) C \bigr\},\qquad \text{and}\qquad L_m^- = \max \bigl\{ K_m^-, S(t_0)/C \bigr\},
\]
respectively, where $C > 1$ is a suitable constant. In practice, the choice
\[
  \log C = 10\sigma_0 \sqrt{t_M-t_0} + \bigl( 1 + \tfrac{1}{2} \sigma_0^2 \bigr) (t_M-t_0),
\]
where $\sigma_0 = \max_{1 \leq m \leq M} \sigma_m$, is sufficient to reduce the truncation errors to round-off level. Heuristically this is clear from \eqref{stdis}, which suggests that the chance that $S(t_m)$ move outside the range $(S(t_0)/C,S(t_0) C)$ is negligibly small. The rigorous derivation of the error bounds can be obtained using a recursive argument, as will be explained in the Appendix.

Now we consider the truncated integral
\[
  \tilde{F}_{m-1}(S) = e^{-2r_m \tau_m/\sigma_m^2} \int_{L_m^-}^{L_m^+} \tilde{V}_m(y) \rho_m(y,S)\,dy.
\]
If $K_m^- \geq S(t_0) C$ or $K_m^+ \leq S(t_0)/C$, then by convention the integral is zero. Thus in what follows we shall assume $K_m^- < S(t_0)C$ and $K_m^+ > S(t_0)/C$. Let
\[
  B_m^\pm = \log \frac{L_m^\pm}{S(t_0)},
\]
and denote
\begin{align*}
  S = S(t_0) e^x,&\qquad y = S(t_0) e^z, \\
  \alpha_m = \frac{1}{\sigma_m^2} \bigl[ r_m-q_m - \tfrac{1}{2} \sigma_m^2 \bigr],&\qquad \beta_m = \frac{1}{\sigma_m^4} \bigl[ r_m-q_m - \tfrac{1}{2} \sigma_m^2 \bigr]^2 + \frac{2r_m}{\sigma_m^2}, \\
  u_m(x) = \tilde{V}_m(S(t_0) e^x),&\qquad w_m(x) = \exp \Bigl\{ -\frac{x^2}{4\tau_m} - \alpha_m x \Bigr\}.
\end{align*}
The truncated integral can be rewritten as
\begin{equation}
  \label{int}
  \tilde{F}_{m-1}(S(t_0) e^x) = \frac{e^{-\beta_m \tau_m}}{2\sqrt{\pi\tau_m}} \int_{B_m^-}^{B_m^+} w_m(x-z) u_m(z)\,dz.
\end{equation}
One can show by differentiating \eqref{int} that $\tilde{F}_m$, and thus $\tilde{V}_m$ and $u_m$, are smooth functions in $x$. This means we can compute the integrals efficiently using a high-order quadrature such as Simpson's rule.

In general, $B_m^\pm$ are different for different values of $m$, so they cannot all be placed on one grid. Now we choose a uniform grid $\mathbf{x} = \{x_1,x_2,\dotsc,x_{N}\}$, where $x_1 = -\log C$ and $x_N = \log C$. Let
\[
  h = \frac{x_N - x_1}{N-1} = \frac{2\log C}{N-1}.
\]
For each $m$, let
\[
  p_m^- = \min \bigl\{ i\colon x_i \geq B_m^- \bigr\},\qquad p_m^+ = \max \bigl\{ i\colon x_i < B_m^+ \bigr\},
\]
where by definition $p_m^- \geq 1$ and $p_m^+ < N$. Since we will use Simpson's rule which requires an odd number of grid points, we define
\[
  p_0 = (p_m^+ - p_m^-) \mod 2,
\]
and rewrite \eqref{int} as
\begin{align}
  \label{int1}
  \tilde{F}_{m-1}(S(t_0) e^x) & = \frac{e^{-\beta_m \tau_m}}{2\sqrt{\pi\tau_m}} \biggl( \int_{x_{p_m^-}}^{x_{p_m^+ + p_0}} w_m(x-z) u_m(z)\,dz \\
  &\qquad{} + \int_{B_m^-}^{x_{p_m^-}} w_m(x-z) u_m(z)\,dz + \int_{x_{p_m^+ + p_0}}^{B_m^+} w_m(x-z) u_m(z)\,dz \biggr). \nonumber
\end{align}
For each $2 \leq m \leq M-1$, we will compute $\tilde{F}_{m-1}(S(t_0) e^x)$ for all
\[
  x \in \bigl\{ x_1,x_2,\dotsc,x_{N}, B_{m-1}^-, B_{m-1}^+, \xi_{m-1}^-, \xi_{m-1}^+ \bigr\},
\]
where
\[
  \xi_{m-1}^- = \frac{1}{2}\, (x_{p_{m-1}^-} + B_{m-1}^-),\qquad \xi_{m-1}^+ = \frac{1}{2}\, (x_{p_{m-1}^+ + p_0} + B_{m-1}^+).
\]
For $m = 1$ we only need to compute $\tilde{F}_{m-1}(S(t_0) e^x)$ for $x = 0$, since the value of the option is given by $\tilde{V}_0(S(t_0))$.

\subsection{Computation of the first integral in \eqref{int1}}

To compute the first integral in \eqref{int1} using Simpson's rule, we let
\[
  U_m(i) =
  \begin{cases}
    u_m(x_i), & i = p_m^-, p_m^+ + p_0 \\
    4 u_m(x_i), & i = p_m^- + 1, p_m^- + 3, \dotsc, p_m^+ + p_0 - 1 \\
    2 u_m(x_i), & i = p_m^- + 2, p_m^- + 4, \dotsc, p_m^+ + p_0 - 2
  \end{cases}.
\]
The integral is discretized as
\begin{equation}
  \label{dis1}
  \int_{x_{p_m^-}}^{x_{p_m^+ + p_0}} w_m(x-z) u_m(z)\,dz = \frac{h}{3} \sum_{i=p_m^-}^{p_m^+ + p_0} w_m(x-x_i) U_m(i) + O(h^4),
\end{equation}
since Simpson's rule is of order 4 \cite{num}. Note that $U_m(i)$ is known from the previous step (or by Proposition \ref{vm1} for $m = M-1$) for all $i = 1,2,\dotsc,N$. For $x \in \{B_{m-1}^\pm, \xi_{m-1}^\pm, 0\}$, the sum \eqref{dis1} can be computed directly with complexity $O(N)$. For all grid points $x \in \{x_1,x_2,\dotsc,x_{N}\}$, on the other hand, the sum \eqref{dis1} can be computed altogether using FFT with complexity $O(N\log N)$. This latter fact is crucial to the efficient implementation of our quadrature method and is a consequence of the following simple observation.

\begin{prop}
\label{intfft}
Define $(2N-1)$-periodic grid functions $\hat{z},\ \hat{U}_m$, and $\hat{F}_m$ by
\begin{align*}
  \hat{z}(i) & = z_i = -2\log C + (i-1)h,\qquad 1 \leq i \leq 2N-1, \\
  \hat{U}_m(i) & =
  \begin{cases}
    0, & 1 \leq i < p_m^- \\
    U_m(i), & p_m^- \leq i \leq p_m^+ + p_0 \\
    0, & p_m^+ + p_0 < i \leq 2N-1
  \end{cases}, \\
  \hat{F}_m & = \mathcal{F}^{-1} \Bigl\{ \mathcal{F} \bigl( w_m(\hat{z}) \bigr) \mathcal{F}(\hat{U}_m) \Bigr\},
\end{align*}
where $\mathcal{F}$ and $\mathcal{F}^{-1}$ denote the discrete Fourier transform and the inverse discrete Fourier transform of size $2N-1$, respectively. Then
\[
  \int_{x_{p_m^-}}^{x_{p_m^+ + p_0}} w_m(x_j-z) u_m(z)\,dz = \frac{h}{3}\, \hat{F}_m(j+N) + O(h^4),
\]
for all $1 \leq j \leq N$. The above discrete Fourier transforms and inverse discrete Fourier transform can be calculated using FFT, and the total computational complexity is $O(N\log N)$.
\end{prop}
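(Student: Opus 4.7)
The sum in \eqref{dis1} is a discrete linear convolution of the kernel $w_m$ with the weighted values $U_m$, evaluated at the $N$ target grid points $x_j$. The plan is to recast it as a single circular convolution of length $2N-1$ and then invoke the convolution theorem together with the FFT. First, I would extend $U_m$ by zero outside $\{p_m^-, \ldots, p_m^+ + p_0\}$, which produces precisely $\hat{U}_m$. Since this support set lies inside $\{1, \ldots, N\}$, the Simpson sum $\sum_{i=p_m^-}^{p_m^+ + p_0} w_m(x_j - x_i) U_m(i)$ equals $\sum_{i=1}^{2N-1} w_m(x_j - x_i) \hat{U}_m(i)$, with the added summands contributing nothing.

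Next, I would check the arithmetic of the index map. Because $x_j - x_i = (j-i)h$ and $z_k = -2\log C + (k-1)h = (k-N)h$, one has $w_m(x_j - x_i) = w_m(z_{j-i+N})$; moreover, for $i,j \in \{1,\ldots,N\}$ the shifted index $j - i + N$ already lies in $\{1,\ldots,2N-1\}$. Thus the Simpson sum becomes $\sum_{i=1}^{2N-1} \hat{U}_m(i)\, w_m(\hat{z})(j+N-i)$, which we want to recognize as the value at position $j+N$ of the length-$(2N-1)$ circular convolution of $\hat{U}_m$ and $w_m(\hat{z})$. The care here — and this is the main technical hurdle — is ensuring that no circular wrap-around contaminates the desired output. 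For the target index $k = j+N$ with $j \in \{1, \ldots, N\}$, the circular convolution sums $\hat{U}_m(i)\, w_m(\hat{z})((k-i)\bmod (2N-1))$ over all $i$; but $\hat{U}_m(i) = 0$ for $i > N$, so only $i \in \{1,\ldots,N\}$ contributes, and for those $i$ one has $k - i = j + N - i \in \{1, \ldots, 2N-1\}$, so the modular reduction is the identity. Hence the circular and linear convolutions agree at position $j+N$.

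Once this ``no aliasing at the output positions of interest'' property is verified, the convolution theorem gives $\hat{F}_m = \mathcal{F}^{-1}\bigl\{\mathcal{F}(w_m(\hat{z}))\, \mathcal{F}(\hat{U}_m)\bigr\}$, which is exactly the definition in the statement, and combining this with the Simpson error bound in \eqref{dis1} yields the claimed identity with $O(h^4)$ accuracy. For complexity, I would simply observe that $\hat{F}_m$ is obtained from two forward FFTs, one pointwise product, and one inverse FFT, each of length $2N-1$; the total cost is $O(N\log N)$. The routine pieces are the Simpson expansion and the FFT cost; the only step that genuinely requires thought is the bookkeeping that identifies $x_j - x_i$ with $z_{j-i+N}$ and confirms that the output indices $j+N$ for $j = 1,\ldots,N$ fall in the ``non-aliased'' half of the padded sequence.
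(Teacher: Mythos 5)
Your argument is correct and follows essentially the same route as the paper's proof: both recognize the Simpson sum \eqref{dis1} as a discrete convolution of the zero-padded weights $\hat{U}_m$ with the kernel samples $w_m(\hat{z})$, verify the index identity $x_j - x_i = z_{j+N-i}$, and then evaluate the convolution via FFT in $O(N\log N)$ operations. Your explicit check that no circular wrap-around contaminates the output positions $j+N$ for $1 \leq j \leq N$ is a slightly more careful rendering of the same bookkeeping the paper performs by restricting to $1-N \leq j-i \leq N-1$.
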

\begin{proof}
We consider the discrete convolution
\[
  G_m(j) = \sum_{i=1}^{2N-1} w_m(z_{j-i}) \hat{U}_m(i),
\]
for $j \in \mathbb{Z}$. Note that by definition,
\[
  z_{j+N-i} = -2\log C + (j-i+N-1)h = (j-i)h = x_j - x_i,
\]
for all $1-N \leq j-i \leq N-1$. Thus for $1 \leq j \leq N$, we have
\begin{align*}
  G_m(j+N) & = \sum_{i=1}^{2N-1} w_m(z_{j+N-i}) \hat{U}_m(i) \\
  & = \sum_{i=p_m^-}^{p_m^+ + p_0} w_m(z_{j+N-i}) U_m(i) = \sum_{i=p_m^-}^{p_m^+ + p_0} w_m(x_j - x_i) U_m(i).
\end{align*}
Therefore \eqref{dis1} with $x = x_j$ can be written as
\begin{equation}
  \label{dis1c}
  \int_{x_{p_m^-}}^{x_{p_m^+ + p_0}} w_m(x_j-z) u_m(z)\,dz = \frac{h}{3}\, G_m(j+N) + O(h^4).
\end{equation}
The discrete convolution $G_m$ can be calculated using FFT as
\[
  G_m = \mathcal{F}^{-1} \Bigl\{ \mathcal{F} \bigl( w_m(\hat{z}) \bigr) \mathcal{F}(\hat{U}_m) \Bigr\} = \hat{F}_m,
\]
with a complexity of $O(N\log N)$ \cite{num}.
\end{proof}

\begin{remark}
Our method differs from other well-known FFT-based methods (such as \cite{conv,fft}) in that we express the discrete quadrature rule \eqref{dis1} directly in terms of discrete Fourier transforms, instead of applying continuous Fourier transform to the integral and then discretizing the Fourier integrals (in other words, we have exchanged the order of Fourier transform and discretization). The direct application of the discrete Fourier transform (to the discrete quadrature rule) not only eliminates the need for artificially-introduced damping factors, which are required for the existence of the continuous Fourier transforms, but also eliminates the need for additional specially-designed computational grids which are required to satisfy Nyquist relations. This enables us to carry out the main recursion \eqref{recur} on a fixed uniform grid, without any additional artificial parameters.
\end{remark}

\subsection{Computation of the last two integrals in \eqref{int1}}

The last two integrals in \eqref{int1} are calculated in similar ways using Simpson's rule. First, note that we may use Proposition \ref{vm1} to calculate $\tilde{V}_{M-1}(S(t_0) e^x)$ for $x \in \{B_{M-1}^\pm, \xi_{M-1}^\pm\}$. Generally all four points are needed if the option has two barriers, and only two are needed if the option has one barrier. For each $2 \leq m \leq M-1$, assume $u_m(x) = \tilde{V}_m(S(t_0) e^x)$ has been calculated for $x \in \{B_m^\pm, \xi_m^\pm\}$. The last two integrals in \eqref{int1} are calculated using Simpson's rule as follows:
\begin{align}
  \label{intpart2}
  & \int_{B_m^-}^{x_{p_m^-}} w_m(x-z) u_m(z)\,dz = \frac{1}{6}\, (x_{p_m^-} - B_m^-) \bigl[ w_m(x-B_m^-) u_m(B_m^-) \\
  &\qquad\qquad\qquad{} + 4 w_m(x-\xi_m^-) u_m(\xi_m^-) + w_m(x-x_{p_m^-}) u_m(x_{p_m^-}) \bigr] + O(h^4), \nonumber \\
  \label{intpart3}
  & \int_{x_{p_m^+ + p_0}}^{B_m^+} w_m(x-z) u_m(z)\,dz = \frac{1}{6}\, (B_m^+ - x_{p_m^+ + p_0}) \bigl[ w_m(x-B_m^+) u_m(B_m^+) \\
  &\qquad\qquad\qquad{} + 4 w_m(x-\xi_m^+) u_m(\xi_m^+) + w_m(x-x_{p_m^+ + p_0}) u_m(x_{p_m^+ + p_0}) \bigr] + O(h^4). \nonumber
\end{align}

\section{Finding Optimal Exercise Prices for Bermudan options}
\label{sec_berms}

Unlike autocallable products and barrier options, Bermudan options do not have pre-specified exercise levels. Instead, one needs to solve for $K_m^\pm$ from the equations
\[
  \tilde{V}_m(K_m^+) = K_m^+ - K,
\]
for call options and
\[
  \tilde{V}_m(K_m^-) = K - K_m^-,
\]
for put options, where $\tilde{V}_m$ is determined by \eqref{recur}. For simplicity we assume the yield rates $q_m \geq 0$, which is almost always the case in practice. We will demonstrate how to find $K_m^-$, as the same procedure applies to $K_m^+$. Let
\[
  p = \min \bigl\{ i\colon \tilde{V}_m(S(t_0) e^{x_i}) > K - S(t_0) e^{x_i} \bigr\}.
\]
If $p = 1$ there is no early exercise, so $K_m^- = 0$. Otherwise we have $S(t_0) e^{x_{p-1}} \leq K_m^- < S(t_0) e^{x_{p}}$ by Corollary \ref{mono}. The value of $K_m^-$ can be found using classical root-finding methods such as the bisecting method or the secant method. Note that the bisecting method is guaranteed to converge by Corollary \ref{mono}, and it takes $O(\log N)$ steps to reduce the error of the approximate root to an order of $O(h^4)$. Since the cost for calculating $\tilde{V}_m$ at one point using \eqref{recur1} is $O(N)$, the total cost for finding the optimal exercise price is $O(N\log N)$. The secant method is superlinear and converges faster than the bisecting method, though its error estimates are not as straightforward.

\section{Summary of the Algorithm}
\label{sec_algo}

We summarize our algorithm as follows:

\begin{algorithmic}[1]
\STATE Define the functions $V_m^a, V_m^b$ as in Lemma \ref{bin}
\STATE Define the function $\tilde{V}_{M-1}$ as in Proposition \ref{vm1}
\IF {option style is Bermudan}
    \STATE {Calculate $K_{M-1}^\pm$ as in Section \ref{sec_berms}}
\ENDIF
\STATE Calculate $p_{M-1}^\pm$ and $p_0$ to find the bounds of integration in \eqref{int1}
\STATE Use Proposition \ref{vm1} to compute $\tilde{V}_{M-1}(S(t_0) e^x)$ for $x \in \{B_{M-1}^\pm, \xi_{M-1}^\pm\}$, and assign their values to $v_1^\pm, v_2^\pm$ respectively
\STATE Define a vector $\mathbf{S}$ as $S(i) \gets S(t_0) e^{x_i}$ for $i = 1,2,\dotsc,N$
\STATE Define a vector $\mathbf{y}$ as $y(i) \gets \tilde{V}_{M-1}(S(i))$ for $i = 1,2,\dotsc,N$
\FOR {$m = M-1$ downto 2}
    \STATE{Let $\hat{z}$ and $\hat{U}_m$ be as defined in Proposition \ref{intfft}}
    \STATE{$\hat{F}_m \gets \mathcal{F}^{-1} \Bigl\{ \mathcal{F} \bigl( w_m(\hat{z}) \bigr) \mathcal{F}(\hat{U}_m) \Bigr\}$}
    \STATE{Define (or redefine) the vector $\mathbf{Y_1}$ as $Y_1(j) \gets \frac{h}{3} \hat{F}_m(j+N)$ for $j = 1,2,\dotsc,N$}
    \STATE{Use \eqref{intpart2}, \eqref{intpart3}, and the values of $v_1^\pm, v_2^\pm, y(p_m^-), y(p_m^+ + p_0)$ to compute the last two integrals in \eqref{int1} at $x_i$ for $i = 1,2,\dotsc,N$, and assign their values to $\mathbf{Y_2}$ and $\mathbf{Y_3}$}

    \IF {option style is Bermudan}
        \STATE {Calculate $K_{m-1}^\pm$ as in Section} \ref{sec_berms}
    \ENDIF
    \STATE Calculate $p_{m-1}^\pm$ and $p_0$ to find the bounds of integration in \eqref{int1}
    \STATE Compute $\tilde{V}_{m-1}(S(t_0) e^x)$ for $x \in \{B_{m-1}^\pm, \xi_{m-1}^\pm\}$ using \eqref{recur1} (where the first integral in \eqref{int1} is computed using \eqref{dis1}, and the last two integrals using \eqref{intpart2}, \eqref{intpart3}, and the existing values of $v_1^\pm, v_2^\pm, y(p_m^-), y(p_m^+ + p_0)$), and assign their values to $v_1^\pm, v_2^\pm$ respectively

    \STATE{
    \begin{align*}
      \mathbf{y} & \gets \mathbf{Y_1} + \mathbf{Y_2} + \mathbf{Y_3} + a_m^+ V_m^a(\mathbf{S},K_m^+,1) + b_m^+ V_m^b(\mathbf{S},K_m^+,1) \\
      &\qquad\qquad{} + a_m^- V_m^a(\mathbf{S},K_m^-,-1) + b_m^- V_m^b(\mathbf{S},K_m^-,-1),
    \end{align*}
    as in \eqref{recur1} (note that $\mathbf{y}$ now stores $\tilde{V}_{m-1}(S(i))$ for $i = 1,2,\dotsc,N$)}
\ENDFOR
\STATE Compute $\tilde{V}_0(S(t_0))$ using \eqref{recur1}, where the first integral in \eqref{int1} is computed using \eqref{dis1}, and the last two integrals using \eqref{intpart2}, \eqref{intpart3}, and the existing values of $v_1^\pm, v_2^\pm, y(p_1^-), y(p_1^+ + p_0)$
\end{algorithmic}

Since the computational complexity of each step of the loop is $O(N\log N)$, the total complexity is $O(MN\log N)$.

\begin{remark}
While other quadrature methods typically employ multiple uniform grids or specially-designed (nonuniform or shifted) grids, our method utilizes only a fixed one-dimensional uniform grid, which not only eliminates the need for complicated inter-grid data transfer procedures, but also eliminates the need for special subroutines that are often required to interpolate data across discontinuities. This makes our method particularly easy to implement.
\end{remark}

\section{Numerical Examples}
\label{sec_exmp}

We will demonstrate the accuracy and efficiency of the proposed method using two examples, in which the value of an autocallable structured product and that of a double barrier option with time-dependent barriers are found.

\subsection{Example 1: Autocallable Structured Product}

We consider a knock-out autocallable structured product maturing in one year. The price of the underlying asset is 3000, the nominal amount is 1, and the volatility is 20\%. The observation dates (in years from now), barrier levels, and risk-free rates (in \%) are given below in Table \ref{tab_auto}.
\begin{table}[ht]
  \centering
  \caption{An autocallable structured product.}
  \begin{tabular}{ccc}
    \toprule
    Observation date & Barrier level & Risk-free rate \\
    \midrule
    0.2 & 3050 & 2 \\
    0.4 & 3100 & 2.1 \\
    0.6 & 3150 & 2.2 \\
    0.8 & 3200 & 2.3 \\
    1 & 3250 & 2.4 \\
    \bottomrule
  \end{tabular}
  \label{tab_auto}
\end{table}%

If the asset price reaches or goes above the barrier level at some observation date $t$, the investor receives a payment of $4\% \times t$. If the asset price is below the barrier at every observation date, the investor will have to pay a premium of $1\%$. The relative errors of the computed option values with varying grid sizes are shown below in Figure \ref{fig_auto_err_q}, where the exact option value is taken to be the one computed on the grid of size 70,001.
\begin{figure}[ht]
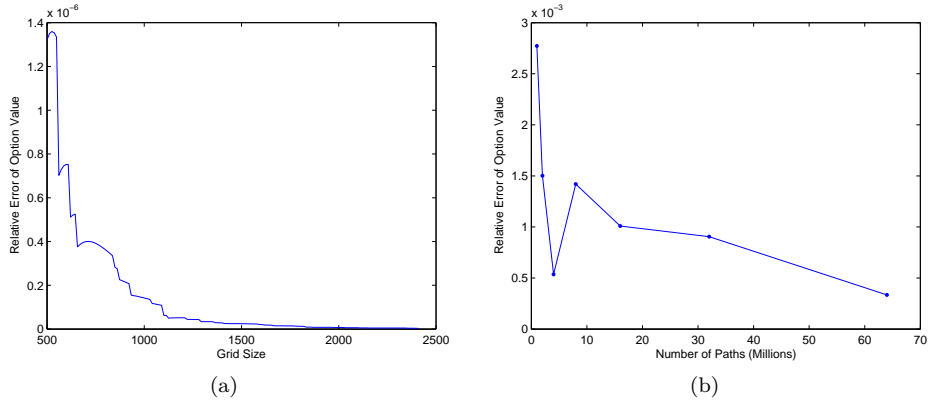

  \centering
  \subfigure[]{
    \includegraphics[scale=0.45]{example1}
    \label{fig_auto_err_q}
  }\hspace{2mm}
  \subfigure[]{
    \includegraphics[scale=0.45]{mc1}
    \label{fig_auto_err_mc}
  }
  \caption{Errors for the autocallable structured product, computed using (a) the proposed method and (b) Monte-Carlo simulations.}
\end{figure}%

As a comparison, the relative errors of the option values computed using Monte-Carlo simulations with antithetic variates technique are shown in Figure \ref{fig_auto_err_mc}. As is clear from the figures, the error of the option value computed using the proposed method is well within $10^{-5}$ with just 501 points in the grid, and drops very quickly as the grid size increases. In contrast, it takes more than ten million paths for Monte-Carlo simulations to reduce the error of the computed option value to within $10^{-3}$, and the error decays very slowly as the number of paths increases.

Figure \ref{fig_auto_time} below shows the CPU time used by the proposed method to price the autocallable structured product, where the code is developed in Matlab and is run on a personal computer.
\begin{figure}[ht]
  \centering
  \includegraphics[scale=0.45]{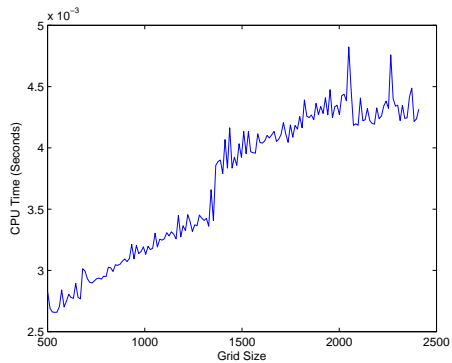}
  \caption{CPU time for autocallable structured product valuation, using the proposed method.}
  \label{fig_auto_time}
\end{figure}%
As is clear from the figure, the CPU time required by the proposed method is well within 0.01 seconds, and it increases approximately linearly as grid size increases. It is difficult to compare the speed of the proposed method with that of Monte-Carlo simulations directly, since the CPU time required by the latter depends largely on specific implementations. Nevertheless, the typical CPU time consumed by a Monte-Carlo simulation with tens of millions of paths ranges from tens of seconds to a few minutes.

\subsection{Example 2: Double Barrier Option}

As another example, consider a knock-out double barrier put option with time-dependent barrier levels. The price of the underlying asset is 2500, the strike price is 2600, the nominal amount is 1, and the volatility is 25\%. The option matures in two years. The observation dates (in years from now), barrier levels, and risk-free rates (in \%) are given below in Table \ref{tab_barr}.
\begin{table}[ht]
  \centering
  \caption{A double barrier option.}
  \begin{tabular}{cccc}
    \toprule
    Observation date & Barrier level 1 & Barrier level 2 & Risk-free rate \\
    \midrule
    0.25 & 2200 & 2800 & 1 \\
    0.50 & 2100 & 2900 & 1.1 \\
    0.75 & 2000 & 3000 & 1.2 \\
    1 & 1900 & 3100 & 1.3 \\
    1.25 & 1800 & 3200 & 1.2 \\
    1.50 & 1700 & 3300 & 1.3 \\
    1.75 & 1600 & 3400 & 1.4 \\
    2 & $-$ & $-$ & 1.5 \\
    \bottomrule
  \end{tabular}
  \label{tab_barr}
\end{table}%

If the asset price falls below barrier level 1 or rises above barrier level 2 at any observation date, the option ceases to exist. If the option is still valid at maturity, the payoff is the same as that of a vanilla put option. The relative errors of the computed option values with varying grid sizes are shown below in Figure \ref{fig_barr_err_q}, where the exact option value is taken to be the one computed on the grid of size 50,001.
\begin{figure}[ht]
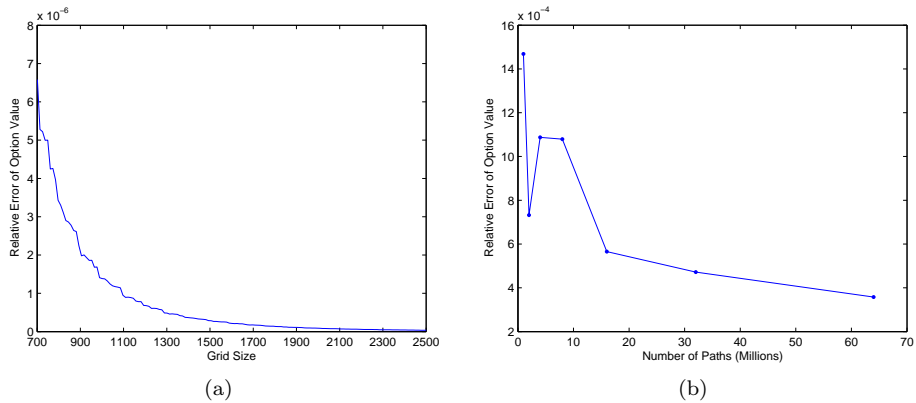

  \centering
  \subfigure[]{
    \includegraphics[scale=0.45]{example2}
    \label{fig_barr_err_q}
  }\hspace{2mm}
  \subfigure[]{
    \includegraphics[scale=0.45]{mc2}
    \label{fig_barr_err_mc}
  }
  \caption{Errors for the double barrier option, computed using (a) the proposed method and (b) Monte-Carlo simulations.}
\end{figure}%

As a comparison, the relative errors of the option values computed using Monte-Carlo simulations with antithetic variates technique are shown in Figure \ref{fig_barr_err_mc}. As is clear from the figures, the error of the option value computed using the proposed method is within $10^{-5}$ with just 701 points in the grid, and drops very quickly as the grid size increases. In contrast, it takes more than ten million paths for Monte-Carlo simulations to reduce the error of the computed option value to within $10^{-3}$, and the error decays very slowly as the number of paths increases.

Figure \ref{fig_barr_time} below shows the CPU time used by the proposed method to price the double barrier option, where the code is developed in Matlab and is run on a personal computer.
\begin{figure}[ht]
  \centering
  \includegraphics[scale=0.45]{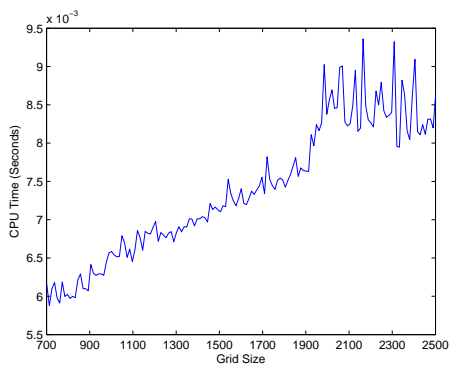}
  \caption{CPU time for double barrier option valuation, using the proposed method.}
  \label{fig_barr_time}
\end{figure}%
As is clear from the figure, the CPU time required by the proposed method is within 0.01 seconds, and it increases approximately linearly as grid size increases. In contrast, the typical CPU time consumed by a Monte-Carlo simulation with tens of millions of paths ranges from tens of seconds to a few minutes.

\begin{remark}
Although the designed order of the proposed method is 4, in the above numerical examples, a lower order of convergence (close to 3) is actually observed for the grid sizes considered. It is interesting to note that this apparent ``loss'' of order of accuracy is not a defect of our method; rather, it is a manifestation of the subtle influences that barrier levels can have on option pricing algorithms. These influences can be understood from two perspectives. First, in the above examples, the barrier levels $K_m^\pm$ are close to the spot price $S(t_0)$. This gives rise to a relatively small integration domain $[B_m^-,B_m^+]$ compared with the entire computational domain $[{-}\log C,\log C]$ (recall that
\[
  B_m^+ - B_m^- = \log \frac{L_m^+}{S(t_0)} - \log \frac{L_m^-}{S(t_0)} = \log \frac{L_m^+}{L_m^-} \leq \min \Bigl\{ \log \frac{K_m^+}{K_m^-}, \log \frac{K_m^+}{C^{-1}} \Bigr\}
\]
), which means that the set of grid points that are available for the discrete quadrature rule \eqref{dis1} represents only a relatively small fraction of the set of grid points introduced on the entire computational domain. Secondly, in the above examples the option prices $V_{m+1}(S)$ contain discontinuities at each observation date $t_{m+1}$. These discontinuities necessarily show up in the form of large gradients in the (smooth) functions $\tilde{V}_m(S)$ (via the expectation integrals $\mathbb{E} \bigl[ V_{m+1}(\cdot)|S \bigr]$), which means that the discrete quadrature rule \eqref{dis1} is being applied to fast-varying functions with only a relatively small number of grid points, leaving the integrands only marginally resolved and hence explaining the degeneracy observed in the convergence rate. If the barrier levels $K_m^\pm$ are pushed farther away from the spot price $S(t_0)$, so that the option becomes increasingly like a vanilla option, then the discrete quadrature rule \eqref{dis1} effectively applies to slow-varying functions with a relatively large number of grid points, which improves the resolution of the integrands and hence the convergence rate (to close to 4). Despite these caveats on convergence rate, we emphasize that our method is capable of pricing a sophisticated discretely monitored option and obtaining five to six significant digits within a fraction of a second, while at the same time being very easy to understand and to implement. Thus the (relatively technical) issue of convergence rate should pose no real concerns in practice.
\end{remark}

\begin{remark}
Although a relative error of the order $10^{-5}$ or $10^{-6}$ may not always seem necessary for option pricing problems considered in the real financial world, this extra accuracy is actually needed in the calculation of the Greeks, which are typically approximated by finite difference formulas and which are much more sensitive to numerical errors incurred in the calculation of option prices.
\end{remark}

\section{Appendix}

\subsection{Estimate of Truncation Errors}

To estimate the truncation error for the integral in \eqref{recur1}, we first introduce

\begin{lem}
\label{bound}
Let
\begin{align*}
  A = \max \bigl\{ |a_1^\pm|, \dotsc, |a_M^\pm|, |a_M| \bigr\},&\qquad B = \max \bigl\{ |b_1^\pm|, \dotsc, |b_M^\pm|, |b_M| \bigr\}, \\
  R = \min \bigl\{ r_1, \dotsc, r_M, 0 \bigr\},&\qquad Q = \min \bigl\{ q_1, \dotsc, q_M, 0 \bigr\}.
\end{align*}
Then
\begin{align*}
  |\tilde{V}_m(S)| \leq e^{Q(t_m-t_M)} AS + e^{R(t_m-t_M)} B,&\qquad \forall 0 \leq m \leq M,\ \forall S \in (K_m^-,K_m^+), \\
  \intertext{and}
  |V_m(S)| \leq e^{Q(t_m-t_M)} AS + e^{R(t_m-t_M)} B,&\qquad \forall 0 \leq m \leq M,\ \forall S \in (0,\infty).
\end{align*}
\end{lem}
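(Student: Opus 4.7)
The plan is to prove both bounds simultaneously by backward induction on $m$, starting from $m = M$ and descending to $m = 0$. The base case is immediate from the definition \eqref{vm}: on each of the three pieces, $V_M(S)$ is a linear function whose slope has absolute value at most $A$ and whose intercept has absolute value at most $B$, so $|V_M(S)| \leq AS + B$, which matches the claimed bound since both exponential factors equal $1$ at $m = M$. Because $\tilde{V}_M = V_M$ by Proposition \ref{main}, the claim for $\tilde{V}_M$ follows as well.

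For the inductive step I would assume the bound holds for $V_m$ and first bound $\tilde{V}_{m-1}$ on $(K_{m-1}^-, K_{m-1}^+)$ via the risk-neutral valuation formula \eqref{exp}. Observing that $2\tau_m/\sigma_m^2 = \Delta t_m$, the discount factor is just $e^{-r_m \Delta t_m}$, so I would write
\[
  |\tilde{V}_{m-1}(S)| \leq e^{-r_m \Delta t_m}\, \mathbb{E}\bigl[ |V_m(Y)| \bigm| S \bigr] \leq e^{-r_m \Delta t_m} \Bigl( e^{Q(t_m - t_M)} A\, \mathbb{E}[Y \mid S] + e^{R(t_m - t_M)} B \Bigr),
\]
where $Y = S(t_m)$. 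The key input is the standard martingale identity $\mathbb{E}[S(t_m) \mid S(t_{m-1}) = S] = S e^{(r_m - q_m) \Delta t_m}$, which follows directly from the lognormal law \eqref{stdis}. Substituting, the $AS$ term acquires the factor $e^{-r_m \Delta t_m} e^{(r_m - q_m) \Delta t_m} = e^{-q_m \Delta t_m}$, and since $Q \leq q_m$ one has $e^{-q_m \Delta t_m}\, e^{Q(t_m - t_M)} \leq e^{Q(t_{m-1} - t_M)}$; likewise, since $R \leq r_m$, the $B$ term picks up a factor $e^{-r_m \Delta t_m} \leq e^{-R \Delta t_m}$, yielding $e^{R(t_{m-1} - t_M)} B$.

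This will give the claimed bound for $\tilde{V}_{m-1}$ on $(K_{m-1}^-, K_{m-1}^+)$. To extend to $V_{m-1}$ on all of $(0, \infty)$, I would invoke $V_{m-1} = \tilde{V}_{m-1}$ on $(K_{m-1}^-, K_{m-1}^+)$ via \eqref{vtil}, while on the two complementary regions $V_{m-1}$ is linear with $|a_{m-1}^\pm| \leq A$ and $|b_{m-1}^\pm| \leq B$, giving $|V_{m-1}(S)| \leq AS + B$; since $Q, R \leq 0$ and $t_{m-1} - t_M \leq 0$, both exponential factors are at least $1$, so the inductive bound is preserved in this region as well, and the induction closes.

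The only real obstacle is the bookkeeping of exponents: verifying that the discount factor $e^{-r_m \Delta t_m}$, combined with the risk-neutral drift $e^{(r_m - q_m) \Delta t_m}$ and the inductive exponentials, collapses exactly to $e^{Q(t_{m-1} - t_M)}$ and $e^{R(t_{m-1} - t_M)}$. This is where the definitions of $Q$ and $R$ as minima that include $0$ become essential, both so that $Q \leq q_m$ and $R \leq r_m$ (to absorb the single-step factors) and so that the two exponentials are $\geq 1$ (to handle the early-exercise regions); it is also where one must recall the identity $2 r_m \tau_m / \sigma_m^2 = r_m \Delta t_m$ that stems from $\tau_m = \tfrac{1}{2} \sigma_m^2 \Delta t_m$.
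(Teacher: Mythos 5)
Your proposal is correct and follows essentially the same route as the paper: backward induction from $m = M$, bounding $\tilde{V}_{m-1}$ on the continuation region via the risk-neutral integral \eqref{exp} together with the lognormal mean $\mathbb{E}[S(t_m)\mid S(t_{m-1})=S] = S e^{(r_m-q_m)\Delta t_m}$ (the paper phrases this as $\int_0^\infty z\rho_m(z,1)\,dz = e^{2(r_m-q_m)\tau_m/\sigma_m^2}$ after the substitution $z=y/S$), absorbing the one-step factors using $Q \leq q_m$, $R \leq r_m$, and then extending to $V_{m-1}$ on $(0,\infty)$ via \eqref{correct} and the fact that both exponentials are at least $1$. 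The bookkeeping of exponents you flag as the only delicate point is handled exactly as you describe.
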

\begin{proof}
Clearly, by assumption (cf. \eqref{vm}),
\[
  |\tilde{V}_M(S)| = |V_M(S)| \leq AS + B,\qquad \forall S \in (0,\infty).
\]
Now assume
\begin{align*}
  |\tilde{V}_m(S)| \leq e^{Q(t_m-t_M)} AS + e^{R(t_m-t_M)} B,&\qquad \forall S \in (K_m^-,K_m^+), \\
  \intertext{and}
  |V_m(S)| \leq e^{Q(t_m-t_M)} AS + e^{R(t_m-t_M)} B,&\qquad \forall S \in (0,\infty),
\end{align*}
for some $1 \leq m \leq M$. By \eqref{exp} and Proposition \ref{main}, we have
\[
  \tilde{V}_{m-1}(S) = e^{-2r_m \tau_m/\sigma_m^2} \int_0^\infty V_m(y) \rho_m(y,S)\,dy,
\]
for $K_{m-1}^- < S < K_{m-1}^+$. By definition \eqref{defrho}, it is clear that $\rho_m(y,S) = \rho_m(y/S,1)/S$. Thus a simple change of variable $z = y/S$ yields
\begin{align*}
  & |\tilde{V}_{m-1}(S)| = e^{-2r_m \tau_m/\sigma_m^2} \Bigl| \int_0^\infty V_m(Sz) \rho_m(z,1)\,dz \Bigr| \\
  &\qquad{} \leq e^{-2r_m \tau_m/\sigma_m^2} \int_0^\infty \bigl[ e^{Q(t_m-t_M)} AS z + e^{R(t_m-t_M)} B \bigr] \rho_m(z,1)\,dz \\
  &\qquad{} = e^{-2r_m \tau_m/\sigma_m^2} \biggl( e^{Q(t_m-t_M)} AS \int_0^\infty z\rho_m(z,1)\,dz + e^{R(t_m-t_M)} B \biggr).
\end{align*}
The integral $\int_0^\infty z\rho_m(z,1)\,dz$ is the expectation of the lognormal distribution, which is simply $\exp\{2(r_m-q_m) \tau_m/\sigma_m^2\}$ \cite{log}. Thus we have (observe $R \leq 0$ and $Q \leq 0$)
\begin{align*}
  & |\tilde{V}_{m-1}(S)| \leq e^{-2q_m \tau_m/\sigma_m^2 + Q(t_m-t_M)} AS + e^{-2r_m \tau_m/\sigma_m^2 + R(t_m-t_M)} B \\
  &\qquad{} \leq e^{Q(t_{m-1}-t_M)} AS + e^{R(t_{m-1}-t_M)} B,
\end{align*}
for all $S \in (K_{m-1}^-,K_{m-1}^+)$. By \eqref{correct} and Proposition \ref{main}, we then deduce
\[
  |V_{m-1}(S)| \leq e^{Q(t_{m-1}-t_M)} AS + e^{R(t_{m-1}-t_M)} B,
\]
for all $S \in (0,\infty)$. The result then follows from induction.
\end{proof}

The possibly infinite integral in \eqref{recur} is approximated by a finite integral. To be specific, let $C > 1,\ S_0 = S(t_0)$, and $\tilde{G}_{M-1}(S) = \tilde{V}_{M-1}(S)$. We consider $\tilde{G}_{m-1}$ ($1 \leq m \leq M-1$) defined recursively by
\begin{align*}
  \tilde{G}_{m-1}(S) & = e^{-2r_m \tau_m/\sigma_m^2} \mathbb{E} \bigl[ \tilde{G}_m(\cdot)|S \bigr] \\
  & = e^{-2r_m \tau_m/\sigma_m^2} \int_{L_m^-}^{L_m^+} \tilde{G}_m(y) \rho_m(y,S)\,dy + a_m^+ V_m^a(S,K_m^+,1) \\
  &\qquad{} + b_m^+ V_m^b(S,K_m^+,1) + a_m^- V_m^a(S,K_m^-,-1) + b_m^- V_m^b(S,K_m^-,-1).
\end{align*}
A direct calculation shows that the errors $\tilde{R}_m = \tilde{V}_m - \tilde{G}_m$ satisfy the recursion
\begin{align*}
  \tilde{R}_{m-1}(S) & = e^{-2r_m \tau_m/\sigma_m^2} \biggl( \int_{L_m^-}^{L_m^+} \tilde{R}_m(y) \rho_m(y,S)\,dy \\
  &\qquad{} + \int_{K_m^-}^{L_m^-} \tilde{V}_m(y) \rho_m(y,S)\,dy + \int_{L_m^+}^{K_m^+} \tilde{V}_m(y) \rho_m(y,S)\,dy \biggr).
\end{align*}
We use the operator notation
\[
  \mathcal{T}_m(f)(S) = e^{-2r_m \tau_m/\sigma_m^2} \int_0^\infty f(y) \rho_m(y,S)\,dy,
\]
to write the recursion of $\tilde{R}_m$ as
\[
  \tilde{R}_{m-1} = \mathcal{T}_m(\tilde{R}_m \chi_{(L_m^-,L_m^+)}) + \mathcal{T}_m(\tilde{V}_m \chi_{(K_m^-,L_m^-] \cup [L_m^+,K_m^+)}),
\]
for $1 \leq m \leq M-1$. Note also that $\tilde{R}_{M-1}(S) = 0$.

Now we consider $\tilde{Q}_m$ defined by the recursion
\begin{equation}
  \label{qm}
  \tilde{Q}_{m-1} = \mathcal{T}_m(\tilde{Q}_m) + \mathcal{T}_m \bigl( (e^{Q(t_0-t_M)} Ay + e^{R(t_0-t_M)} B) \chi_{(0,S_0/C] \cup [S_0 C,\infty)} \bigr),
\end{equation}
and $\tilde{Q}_{M-1} = 0$. It is easy to show using Lemma \ref{bound} and induction that $|\tilde{R}_m| \leq \tilde{Q}_m$ for all $0 \leq m \leq M-1$. We can also apply the recursion formula \eqref{qm} to get the expansion
\[
  \tilde{Q}_0 = \sum_{m=1}^{M-1} \mathcal{T}_1 \circ \dotsb \circ \mathcal{T}_m \bigl( (e^{Q(t_0-t_M)} Ay + e^{R(t_0-t_M)} B) \chi_{(0,S_0/C] \cup [S_0 C,\infty)} \bigr).
\]
Since $\mathcal{T}_m$ is the risk-neutral expectation operator discounted from $t_m$ to $t_{m-1}$, $\mathcal{T}_1 \circ \dotsb \circ \mathcal{T}_m(f)$ is simply the value of a European option whose payoff at $t_m$ is given by $f$. Thus $\tilde{Q}_0$ is equal to the value of a sum of $2(M-1)$ binary call options with strike $S_0 C$ and $2(M-1)$ binary put options with strike $S_0/C$. As a result,
\begin{align}
  \label{q00}
  \tilde{Q}_0(S_0) & = e^{Q(t_0-t_M)} A \sum_{m=1}^{M-1} \bigl[ C_a(S_0,S_0 C,t_m) + P_a(S_0,S_0/C,t_m) \bigr] \\
  &\qquad{} + e^{R(t_0-t_M)} B \sum_{m=1}^{M-1} \bigl[ C_b(S_0,S_0 C,t_m) + P_b(S_0,S_0/C,t_m) \bigr], \nonumber
\end{align}
where $C_a, P_a, C_b, P_b$ denote values of asset-or-nothing call, asset-or-nothing put, cash-or-nothing call, and cash-or-nothing put respectively. According to Lemma \ref{bin}, the values of these binary options are given by
\begin{align*}
  C_a(S_0,S_0 C,t_m) = S_0 e^{-\bar{q}_m (t_m-t_0)} N(d_3),&\quad P_a(S_0,S_0/C,t_m) = S_0 e^{-\bar{q}_m (t_m-t_0)} N(d_4), \\
  C_b(S_0,S_0 C,t_m) = e^{-\bar{r}_m (t_m-t_0)} N(d_5),&\quad P_b(S_0,S_0/C,t_m) = e^{-\bar{r}_m (t_m-t_0)} N(d_6),
\end{align*}
where
\begin{align*}
  d_3 & = \frac{1}{\bar{\sigma}_m \sqrt{t_m-t_0}} \bigl[ -\log C + (\bar{r}_m-\bar{q}_m + \tfrac{1}{2} \bar{\sigma}_m^2) (t_m-t_0) \bigr], \\
  d_4 & = \frac{1}{\bar{\sigma}_m \sqrt{t_m-t_0}} \bigl[ -\log C - (\bar{r}_m-\bar{q}_m + \tfrac{1}{2} \bar{\sigma}_m^2) (t_m-t_0) \bigr], \\
  d_5 & = \frac{1}{\bar{\sigma}_m \sqrt{t_m-t_0}} \bigl[ -\log C + (\bar{r}_m-\bar{q}_m - \tfrac{1}{2} \bar{\sigma}_m^2) (t_m-t_0) \bigr], \\
  d_6 & = \frac{1}{\bar{\sigma}_m \sqrt{t_m-t_0}} \bigl[ -\log C - (\bar{r}_m-\bar{q}_m - \tfrac{1}{2} \bar{\sigma}_m^2) (t_m-t_0) \bigr],
\end{align*}
and $\bar{r}_m, \bar{q}_m, \bar{\sigma}_m$ represent the time-weighted averages of $\{r_n, q_n, \sigma_n\}_{n=1}^{m}$ respectively. Generally, in practice, the absolute values of annual interest and yield rates will not exceed $50\%$, $t_M-t_0$ will not exceed $10$ years, and $A$ will not exceed $1$. For general autocallable structured products, $B$ will not exceed $t_M-t_0$, and for Bermudan options $B$ will not exceed $K$, which is not much larger than $S_0$. We may also assume $M \leq 120$, which corresponds to products that are not too frequently monitored, say monthly (for more frequently monitored products, such as daily monitored products, continuity correction methods \cite{cont1,cont2} are usually more appropriate). Let $\sigma_0 = \max_{1 \leq m \leq M} \sigma_m$. If we choose
\[
  \log C = 10\sigma_0 \sqrt{t_M-t_0} + \bigl( 1 + \tfrac{1}{2} \sigma_0^2 \bigr) (t_M-t_0),
\]
we can make sure that
\[
  d_{3,4,5,6} \leq -10,\qquad \text{and thus}\qquad N(d_{3,4,5,6}) < 10^{-23}.
\]
A crude estimate using \eqref{q00} then shows that the error bound $\tilde{Q}_0(S_0)$ does not exceed $10^{-15}(S_0+1)$. This means the relative truncation error is negligible for all practical purposes.

\subsection{Analysis of $K_m^\pm$ for Bermudan Options}

The proper application of Proposition \ref{main} requires the uniqueness of the exercise prices $K_m^\pm$, which we now establish for Bermudan options.

To begin with, observe that the risk-neutral pricing formulas \eqref{exp}--\eqref{correct} applied to Bermudan options can be written in an alternative form as
\begin{equation}
  \label{vtv0}
  V_{m-1}(S) = \max \bigl\{ \tilde{V}_{m-1}(S), \epsilon (S-K) \bigr\},\qquad \forall 1 \leq m \leq M,\ \forall S \in (0,\infty),
\end{equation}
where
\[
  \epsilon =
  \begin{cases}
    1, & \text{if the option is a Bermudan call} \\
    -1, & \text{if the option is a Bermudan put}
  \end{cases},
\]
and
\begin{align}
  \label{exp1}
  \tilde{V}_{m-1}(S) & = e^{-2r_m \tau_m/\sigma_m^2} \int_0^\infty V_m(y) \rho_m(y,S)\,dy \\
  & = e^{-2r_m \tau_m/\sigma_m^2} \int_0^\infty V_m(Sz) \rho_m(z,1)\,dz. \nonumber
\end{align}
Since, by definition,
\[
  V_M(S) = \max \bigl\{ 0, \epsilon (S-K) \bigr\},
\]
\eqref{vtv0} and \eqref{exp1} define $V_m$ and $\tilde{V}_m$ recursively for all $0 \leq m \leq M-1$. It is easy to see that $V_m(S) \geq \tilde{V}_m(S) > 0$ for all $0 \leq m \leq M-1$ and $S \in (0,\infty)$.

\begin{prop}
\label{bermb}
Assume $q_m \geq 0$ for all $1 \leq m \leq M$. For a Bermudan call option with strike $K$, the equation $\tilde{V}_m(K_m^+) = K_m^+ - K$ has at most one finite solution, and
\[
  0 < \tilde{V}_m(S_2) - \tilde{V}_m(S_1) < S_2-S_1,\qquad \forall 1 \leq m \leq M-1,\ \forall S_2 > S_1 > 0.
\]
For a Bermudan put option with strike $K$, the equation $\tilde{V}_m(K_m^-) = K - K_m^-$ has at most one finite solution, and
\[
  0 < \tilde{V}_m(S_1) - \tilde{V}_m(S_2) < S_2-S_1,\qquad \forall 1 \leq m \leq M-1,\ \forall S_2 > S_1 > 0.
\]
\end{prop}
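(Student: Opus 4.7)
The plan is to proceed by backward induction on $m$, establishing the strict monotonicity and Lipschitz-type bounds for $\tilde{V}_m$ from $m = M-1$ down to $m = 1$; the uniqueness of $K_m^\pm$ then follows immediately. Indeed, if two values $K_1 < K_2$ both satisfied the put exercise equation $\tilde{V}_m(K_i) = K - K_i$, then $\tilde{V}_m(K_1) - \tilde{V}_m(K_2) = K_2 - K_1$, contradicting the strict upper bound $\tilde{V}_m(K_1) - \tilde{V}_m(K_2) < K_2 - K_1$; the call case is symmetric.

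For the base case $m = M-1$, the payoff $V_M(S) = \max\{0,\epsilon(S-K)\}$ is monotone and $1$-Lipschitz, with strict monotonicity and strict Lipschitz bounds holding on positive-measure subsets of $(0,\infty)$. From the representation in \eqref{exp1},
\[
  \tilde{V}_{M-1}(S_1) - \tilde{V}_{M-1}(S_2) = e^{-2r_M\tau_M/\sigma_M^2} \int_0^\infty \bigl[ V_M(S_1 z) - V_M(S_2 z) \bigr] \rho_M(z,1)\,dz,
\]
and the fact that $\rho_M(z,1) > 0$ everywhere, the integral inherits strict signs whenever the integrand is strict on a positive-measure set. For the put, the integrand lies pointwise in $[0, z(S_2 - S_1)]$, is strictly positive on $\{z : S_1 z < K\}$, and is strictly less than $z(S_2-S_1)$ on $\{z : S_2 z > K\}$; both sets have positive Lebesgue measure. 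Combining this with the lognormal-mean identity $\int_0^\infty z\rho_M(z,1)\,dz = e^{2(r_M-q_M)\tau_M/\sigma_M^2}$ and the discount factor yields the strict upper bound
\[
  \tilde{V}_{M-1}(S_1) - \tilde{V}_{M-1}(S_2) < e^{-2q_M\tau_M/\sigma_M^2}(S_2-S_1) \leq S_2-S_1,
\]
where the final inequality uses $q_M \geq 0$. The call case is analogous.

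For the inductive step, suppose the claim holds at level $m$. The relay function $V_m(S) = \max\{\tilde{V}_m(S), \epsilon(S-K)\}$ inherits monotonicity and the (non-strict) $1$-Lipschitz bound from both of its arguments, since the pointwise max of two monotone $1$-Lipschitz functions is monotone and $1$-Lipschitz. Moreover, the pointwise inequality $\epsilon \bigl( V_m(S_2 z) - V_m(S_1 z) \bigr) < z(S_2-S_1)$ can only degenerate to equality when both $S_1 z$ and $S_2 z$ lie in the exercise region: in the ``both-held'' case the strict inductive bound for $\tilde{V}_m$ applies, and in the ``mixed'' case the holding-region value strictly exceeds the exercise-region value. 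For the put, however, $\tilde{V}_m(S) > 0 \geq K - S$ for all $S \geq K$, so the holding region always contains the unbounded interval $(K,\infty)$; consequently, for every $z > K/\min(S_1,S_2)$ both arguments lie in the holding region, and the strict inductive bound transfers. The lower bound is handled by the same argument with the roles of holding and exercise reversed on a different positive-measure set of $z$. Integrating against the positive density $\rho_m(z,1)$, invoking the lognormal-mean identity, and using $q_m \geq 0$ then yields the strict bounds for $\tilde{V}_{m-1}$.

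The main obstacle I anticipate is precisely this strictness step in the induction: the non-strict $1$-Lipschitz bound for $V_m$ can be attained with equality in the ``both-in-exercise-region'' case, so propagating strictness requires identifying a positive-measure set of $z$ where both $S_1 z$ and $S_2 z$ sit in the holding region. The key structural input is that the holding region of a Bermudan put (resp.\ call) always contains an unbounded interval $(K,\infty)$ (resp.\ $(0,K)$) because $\tilde{V}_m$ is strictly positive while the intrinsic value $\epsilon(S-K)$ is nonpositive there. A complementary subtlety is the role of the assumption $q_m \geq 0$: it is exactly what converts the sharp discount-weighted constant $e^{-2q_m\tau_m/\sigma_m^2}$ into the unit Lipschitz constant claimed in the proposition, so the hypothesis on yield rates is essential rather than cosmetic.
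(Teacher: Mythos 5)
Your proof is correct and follows the same overall strategy as the paper's: backward induction on $m$, deducing uniqueness of $K_m^\pm$ from the strict two-sided Lipschitz bound, passing the pointwise bounds on $V_m$ through the expectation integral, and closing with the lognormal-mean identity $e^{-2r_m\tau_m/\sigma_m^2}\int_0^\infty z\rho_m(z,1)\,dz = e^{-2q_m\tau_m/\sigma_m^2} \le 1$, which is where $q_m\ge 0$ enters in both arguments. The one place where you genuinely diverge is the strictness-propagation step. The paper explicitly locates the exercise boundary $K_m^- = \inf\{S : \tilde V_m(S) > K-S\}$, writes $V_m$ piecewise around it, carries out a three-case analysis (both below, both above, mixed) to get the bounds \eqref{vv}, and then splits the integral at $z = K_m^-/S_2$ so that strictness comes from the region $\{z : S_2 z \ge K_m^-\}$. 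You instead observe that $V_m = \max\{\tilde V_m,\epsilon(S-K)\}$ is automatically strictly monotone and $1$-Lipschitz as a pointwise max of two such functions, and you harvest strictness from the set $\{z : S_1 z > K\}$, where both arguments lie in the holding region because $\tilde V_m > 0 \ge \epsilon(S-K)$ there. This is a slightly more economical route: it avoids characterizing the exercise boundary inside the induction (which the paper needs anyway for its algorithm, but not logically for this proposition) and replaces the case analysis with a one-line structural fact about maxima, at the cost of using the strict positivity of $\tilde V_m$ (which the paper also records just before the proposition, so nothing extra is assumed). Your closing remarks on where strictness could fail and on the role of $q_m \ge 0$ accurately identify the two delicate points of the argument.
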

\begin{proof}
Clearly $Q = 0$ since $q_m \geq 0$. The proofs for call and put options are very similar, and we will present the argument for put options only which proceeds by induction. Since $\tilde{V}_{M-1}$ is the value of a European vanilla put option and $q_m \geq 0$, its delta is between $-1$ and 0 \cite{book}. Thus
\[
  0 < \tilde{V}_{M-1}(S_1) - \tilde{V}_{M-1}(S_2) = -\int_{S_1}^{S_2} \tilde{V}_{M-1}'(S)\,dS < S_2-S_1,
\]
for all $S_2 > S_1 > 0$. Now suppose $\tilde{V}_{M-1}(S_3) = K - S_3$ and $\tilde{V}_{M-1}(S_4) = K - S_4$ for some $S_4 > S_3 > 0$. This means
\[
  S_4-S_3 = \tilde{V}_{M-1}(S_3) - \tilde{V}_{M-1}(S_4) < S_4-S_3,
\]
which is a contradiction. So the proposition is true for $m = M-1$.

Suppose now the results hold for some $2 \leq m \leq M-1$, which implies that the function $S-K + \tilde{V}_m(S)$ is strictly increasing. Since $\tilde{V}_m(S) > 0$, for sufficiently large $S$ we have $S-K + \tilde{V}_m(S) > 0$, or $\tilde{V}_m(S) > K-S$. This means that
\[
  K_m^- = \inf \bigl\{ S > 0\colon \tilde{V}_m(S) > K-S \bigr\}
\]
is well-defined and satisfies $K_m^- < \infty$. Consider now (cf. \eqref{vtv0})
\begin{equation}
  \label{vtv1}
  V_m(S) =
  \begin{cases}
    \tilde{V}_m(S), & S \geq K_m^- \\
    K-S, & S < K_m^-
  \end{cases},
\end{equation}
and any $S_2 > S_1 > 0$. If $S_2 < K_m^-$ (and hence $S_1 < K_m^-$), we have
\[
  V_m(S_1) - V_m(S_2) = (K-S_1) - (K-S_2) = S_2-S_1,
\]
by \eqref{vtv1}. If $S_1 \geq K_m^-$ (and hence $S_2 \geq K_m^-$), we have
\[
  V_m(S_1) - V_m(S_2) = \tilde{V}_m(S_1) - \tilde{V}_m(S_2) \in (0,S_2-S_1),
\]
by \eqref{vtv1} and inductive hypothesis. If $S_1 < K_m^- \leq S_2$, we have
\begin{align*}
  V_m(S_1) - V_m(S_2) & = (K-S_1) - \tilde{V}_m(S_2) \\
  & < (K-S_1) - (K-S_2) = S_2-S_1, \\
  V_m(S_1) - V_m(S_2) & > (K-S_1) - \tilde{V}_m(S_1) \geq 0,
\end{align*}
by \eqref{vtv1}, inductive hypothesis, and the definition of $K_m^-$. In conclusion, we have shown that
\begin{subequations}\label{vv}
\begin{align}
  \label{vv1}
  0 < V_m(S_1) - V_m(S_2) = S_2-S_1,&\qquad \forall K_m^- > S_2 > S_1 > 0, \\
  \label{vv2}
  0 < V_m(S_1) - V_m(S_2) < S_2-S_1,&\qquad \text{otherwise}.
\end{align}
\end{subequations}

With the aid of \eqref{exp1} and \eqref{vv}, we write
\[
  \tilde{V}_{m-1}(S_1) - \tilde{V}_{m-1}(S_2) = e^{-2r_m \tau_m/\sigma_m^2} (I_1+I_2),
\]
where
\begin{align*}
  0 < I_1 & = \int_0^{K_m^-/S_2} \bigl[ V_m(S_1 z) - V_m(S_2 z) \bigr] \rho_m(z,1)\,dz \\
  & = (S_2-S_1) \int_0^{K_m^-/S_2} z\rho_m(z,1)\,dz, \\
  0 < I_2 & = \int_{K_m^-/S_2}^\infty \bigl[ V_m(S_1 z) - V_m(S_2 z) \bigr] \rho_m(z,1)\,dz \\
  & < (S_2-S_1) \int_{K_m^-/S_2}^\infty z\rho_m(z,1)\,dz.
\end{align*}
As a result,
\begin{align}
  \label{s1s2}
  0 & < \tilde{V}_{m-1}(S_1) - \tilde{V}_{m-1}(S_2) \\
  & < e^{-2r_m \tau_m/\sigma_m^2} (S_2-S_1) \int_0^\infty z\rho_m(z,1)\,dz \leq S_2-S_1, \nonumber
\end{align}
since by elementary properties of lognormal distributions \cite{log},
\[
  e^{-2r_m \tau_m/\sigma_m^2} \int_0^\infty z\rho_m(z,1)\,dz = e^{-2q_m \tau_m/\sigma_m^2} \leq 1.
\]
Now suppose $\tilde{V}_{m-1}(S_3) = K - S_3$ and $\tilde{V}_{m-1}(S_4) = K - S_4$. This means
\[
  \tilde{V}_{m-1}(S_3) - \tilde{V}_{m-1}(S_4) = S_4-S_3,
\]
so by \eqref{s1s2} we must have $S_3 = S_4$. The proposition then follows from induction.
\end{proof}

\begin{cor}
\label{mono}
Assume a Bermudan put option with strike $K$ has an optimal early-exercise level $K_m^- > 0$ for some $1 \leq m \leq M-1$. Then we have $\tilde{V}_m(S) > K-S$ for $S > K_m^-$ and $\tilde{V}_m(S) < K-S$ for $S < K_m^-$. Similarly, for a Bermudan call option we have $\tilde{V}_m(S) < S-K$ for $S > K_m^+$ and $\tilde{V}_m(S) > S-K$ for $S < K_m^+$.
\end{cor}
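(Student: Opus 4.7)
The plan is to deduce this corollary directly from the strict monotonicity estimates established in Proposition \ref{bermb}. I would treat the put and call cases separately, each in one short step, because the two cases are entirely analogous up to sign changes and each reduces to checking the sign of a single auxiliary monotone function at the threshold $K_m^\pm$.

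For the Bermudan put, I would first observe that the estimate
\[
  0 < \tilde{V}_m(S_1) - \tilde{V}_m(S_2) < S_2-S_1,\qquad \forall S_2 > S_1 > 0,
\]
from Proposition \ref{bermb} can be rewritten as
\[
  \tilde{V}_m(S_1) + S_1 < \tilde{V}_m(S_2) + S_2,
\]
so that the auxiliary function $\varphi(S) := \tilde{V}_m(S) - (K-S) = \tilde{V}_m(S) + S - K$ is strictly increasing in $S \in (0,\infty)$. By the defining equation for the optimal early-exercise level, $\varphi(K_m^-) = 0$; strict monotonicity then gives $\varphi(S) > 0$ for $S > K_m^-$ and $\varphi(S) < 0$ for $S < K_m^-$, which is exactly the claim $\tilde{V}_m(S) > K-S$ above $K_m^-$ and $\tilde{V}_m(S) < K-S$ below it.

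For the Bermudan call, I would run the symmetric argument: from
\[
  0 < \tilde{V}_m(S_2) - \tilde{V}_m(S_1) < S_2-S_1,
\]
I get $\tilde{V}_m(S_2) - S_2 < \tilde{V}_m(S_1) - S_1$, so the auxiliary function $\psi(S) := \tilde{V}_m(S) - (S-K)$ is strictly decreasing. Since $\psi(K_m^+) = 0$ by definition of $K_m^+$, I conclude $\psi(S) < 0$ for $S > K_m^+$ and $\psi(S) > 0$ for $S < K_m^+$, which is the desired pair of inequalities.

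There is no real obstacle here: the entire content has already been absorbed into Proposition \ref{bermb}, and the corollary is essentially a restatement in terms of the sign of $\tilde{V}_m - \mathrm{intrinsic\ value}$. The only thing to be careful about is making sure the direction of monotonicity is correctly tracked when the intrinsic payoff $K-S$ or $S-K$ is subtracted; I would state each rearrangement explicitly to avoid sign confusion, rather than invoke ``symmetry'' without comment.
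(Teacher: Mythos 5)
Your proposal is correct and follows essentially the same route as the paper: both deduce from Proposition \ref{bermb} that $\tilde{V}_m(S) + S - K$ is strictly increasing (for the put), note that it vanishes at $K_m^-$, and read off the sign on either side, with the call case handled symmetrically. The only difference is that you write out the call case explicitly where the paper simply remarks that it is similar.
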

\begin{proof}
We will present the proof for put options only as the argument for call options is similar. It follows from Proposition \ref{bermb} that the function $\tilde{V}_m(S) + S-K$ is increasing in $S$. Since $\tilde{V}_m(K_m^-) + K_m^- - K = 0$, we have $\tilde{V}_m(S) > K-S$ for $S > K_m^-$ and $\tilde{V}_m(S) < K-S$ for $S < K_m^-$.
\end{proof}

\section*{Acknowledgements}

We are very grateful to Qingshuo Song for his valuable insights and helpful suggestions, as well as to Zhenan Sui for her careful reading and editing of the manuscript.

\bigskip

\centerline{\scshape Min Huang}
\medskip
{\footnotesize
  \centerline{China Merchants Bank}
  \centerline{7088 Shennan Boulevard, Shenzhen, Guangdong, China}
  \centerline{\email{Email: huang.479@osu.edu}}
}

\bigskip

\centerline{\scshape Guo Luo}
\medskip
{\footnotesize
  \centerline{Department of Mathematics, City University of Hong Kong}
  \centerline{Tat Chee Avenue, Kowloon, Hong Kong}
  \centerline{\email{Email: guoluo@cityu.edu.hk}}
}
%
%

\end{document}